\newtheorem{thm}{Theorem}[section]
\newtheorem{theorem}[thm]{Theorem}
\theoremstyle{definition}
\newtheorem{definition}[thm]{Definition}
\newtheorem{example}[thm]{Example}
\newtheorem{remark}[thm]{Remark}
\newcommand{\id}{\relax{\rm 1\kern-.28em 1}}
\newcommand{\R}{\mathbb{R}}
\newcommand{\C}{\mathbb{C}}
\newcommand{\beq}{\begin{equation}}
\newcommand{\eeq}{\end{equation}}
\newcommand{\Hom}{\mathrm{Hom}}
\newcommand{\ri}{\mathrm{i}}
\newcommand{\rid}{\mathrm{id}}
\newcommand{\rss}{\mathrm{(sspaces)}}
\newcommand{\rsm}{\mathrm{(smanifolds)}}
\newcommand{\sets}{\mathrm{(sets)}}
\newcommand{\rspec}{\mathrm{Spec}}
\newcommand{\uspec}{\underline{\mathrm{Spec}}}
\newcommand{\al}{\alpha}
\newcommand{\be}{\beta}
\newcommand{\ga}{\gamma}
\newcommand{\ep}{\epsilon}
\newcommand{\dal}{{\dot\alpha}}
\newcommand{\dbe}{{\dot\beta}}
\newcommand{\ra}{\rightarrow}
\newcommand{\rd}{\mathrm{d}}
\newcommand{\res}{\mathrm{res}}
\newcommand{\red}{\mathrm{red}}
\newcommand{\rspan}{\mathrm{span}}
\newcommand{\cL}{\mathcal{L}}
\newcommand{\cF}{\mathcal{F}}
\newcommand{\cG}{\mathcal{G}}
\newcommand{\cO}{\mathcal{O}}
\newcommand{\cA}{\mathcal{A}}
\newcommand{\cB}{\mathcal{B}}
\newcommand{\cJ}{\mathcal{J}}
\newcommand{\cC}{\mathcal{C}}
\newcommand{\cD}{\mathcal{D}}
\newcommand{\cH}{\mathcal{H}}
\newcommand{\cHom}{\mathcal{H}om}
\newcommand{\csa}{\mathrm{(c \,salgebras)}}
\newcommand{\svs}{\mathrm{(svector \,spaces)}}
\newcommand{\fp}{\mathfrak p}
\begin{document}
\begin{titlingpage}

\title{\hfill {\small IFIC 17-02}\\\hfill\\\hfill\\{\bf Superfields, nilpotent superfields \\and superschemes}}
\author{Mar\'{\i}a A. Lled\'{o}\\\\\texttt{maria.lledo@ific.uv}\\\\
 Departament de F\'{\i}sica Te\`{o}rica,
Universitat de Val\`{e}ncia \\
 and IFIC (CSIC-UVEG)\\
 C/ Dr. Moliner 50, E-46100 Burjassot (Val\`{e}ncia), Spain.}

\date{}

    \maketitle
    \begin{abstract}
       We interpret superfields in a  functorial formalism that explains the properties that  are assumed for them in the physical applications. The starting point of this research was the need to understand in a sound mathematical framework some algebraic constraints imposed on them, but it lead us to revise the very definition of superfield. The constraints that we investigate in the present work give rise to superschemes that, generically,  are not regular, that is, they do not define a standard supermanifold.
    \end{abstract}
\end{titlingpage}

\section{Introduction}\label{introduction-sec}

 Two equivalent approaches are used by mathematicians to understand supervarieties and supermanifolds. The classical one is by means  of sheaves of superalgebras that properly generalize similar definitions of standard algebraic geometry. This approach started with Berezin, Leites, Kostant  and Manin \cite{bl,ko,le,ma}. We will be more precise in Sections \ref{sclarsf-sec}  and \ref{superschemes-sec}, but let us make a rough approximation here.  From the point of view of algebraic geometry, spaces are  described in terms of rings. Essentially, one constructs first a topological space associated to the ring. The category of such spaces is then (contravariantly)  equivalent to the category of rings considered. One can proceed by replacing commutative rings with commutative (or {\it supercommutative}) superrings and most of the geometric structures generalize, although new phenomena appear. The generalization is successful because, even when superalgebras are non commutative algebras, their non commutativity is of a very particular kind, and one does not need indeed to go into the troubles of the full non commutative geometry. Useful reviews of this approach are for example Refs. \cite{dm,va,ccf,fl}.

On the other hand, the functorial  approach to supergeometry was started by Schwarz \cite{as}. A superspace is described as the association of a set to any arbitrary Grassmann algebra $\Lambda$. The elements in each set are called $\Lambda$-points. The sets of  $\Lambda$-points, though, have to behave properly under homomorphisms of superalgebras $\Lambda\rightarrow \Lambda'$. This means that the assignation has to be functorial, so a superspace is a covariant functor from the category of Grassmann algebras to the category of sets.

The two approaches are related. A space $M$ can also be described in terms of the morphisms $S\rightarrow M$ from any other space $S$ in the same category to $M$. Each  morphism  is called an $S$-point of $M$. The association to each $S$ of the set $M(S):=\Hom(S, M)$ is a contravariant functor called the {\it functor of points of $M$}. A morphism between two spaces $f:M\rightarrow N$ can be given as a set of maps   $M(S)\rightarrow N(S)$  between S-points that is functorial in $S$. This is in fact the content of Yoneda's lemma.

 Using the equivalence of categories between spaces and rings, the functor can be given over the rings corresponding to the global sections of the superspaces, so we will have a covariant functor $\cO(S)\rightarrow \Hom (\cO(M),\cO(S))$.

 The construction, with its own peculiarities,  can be carried over to the category of superrings. Then, the functor of points\footnote{A didactic exposition of the functorial approach in modern terms is in the talk by P. Deligne linked in Ref.  \cite{de}.}, when restricted to the subcategory of Grassmann algebras, can be interpreted as the functor of Schwarz \cite{as}. The original idea was developed further in Ref. \cite{av}  where mappings among superspaces, that is, natural transformations of the functors, are connected with the more standard approach of sheaf theory.

Yet another approach to formalize the theory of supermanifolds was proposed some time ago by  Rogers and De Witt \cite{dw,ar}.   The supergeometry was introduced here by generalizing the numbers (real or complex numbers) to {\it supernumbers}, which are elements of a Grassmann algebra, and devising a calculus over them. Many geometric concepts that were being used by physicists in an heuristic definition appeared to have a more sound meaning; indeed, the language used was close to the language used by physicists.

In Ref. \cite{cs} a comparison of the three approaches is done. An important result is that the Rogers--De Witt point of view can be reconciled with the mathematical one if one assumes that, rather than working with a particular, infinite dimensional Grassmann algebra, the geometric structures are defined functorially over the whole category of superalgebras. It seems to us that the functorial behavior of supergeometry is the very key point to assure the consistency of all the calculations done by physicists.

\bigskip

Nowadays there seems to be a consensus on what supergeometry is, what is a superspace, a supermanifold or a supervariety and how to put geometric structures on them. The mathematical description has become quite clear. Our goal in this paper is not to redefine these concepts nor to reformulate supergeometry. Our more modest goal is to understand what is the nature of   {\it superfields}.  Often, mathematicians say that superfields are maps of super spacetime to a manifold (it could also be a supermanifold in some cases). We will see in a moment, with a simple example, why there is an issue with this definition. The problem was formulated for example in Ref. \cite{ar} in the Rogers--De Witt approach, so it is not new at all. It was mentioned also that in the sheaf-theoretical approach, one would need to appeal to auxiliary odd variables. We will analyze this statement in the present paper.

\bigskip

The definition of `even' or `bosonic' fields offers no difficulty.  They can be  functions on spacetime  valued in a finite dimensional manifold,  sections of a vector bundle, connections over it...  All these objects have a precise geometrical meaning and offer no ambiguity both, in their physics usage and in their mathematical formulation.

An `odd' or `fermionic' field is more difficult to interpret. In physics, it is vaguely stated that it is an `odd function on spacetime', that is, in the simplest case, fermionic fields are functions with values in some super vector space (in its odd part) or superalgebra. However, physicists use properties of odd fields that cannot be reconciled with this naive point of view.

Let us take the simplest example of an odd field on a spacetime $\R^n$ (we are not worrying here about Lorentz invariance), say  $\psi(x)$, $x\in \R^n$. It is sometimes stated that $\psi$ can be seen as a  function $f:\R^n\rightarrow \R$ multiplied by an odd parameter or odd auxiliary variable $\eta$, so $\psi(x)=f(x)\eta$.  More rigourously, this can be interpreted as a map between affine superspaces over $\R$
$$\begin{CD} A^{n|1}@>\psi >>A^{0|1}\,,\end{CD}$$ whose dual map on the corresponding superalgebras is given in terms of the global coordinates (see the Remark \ref{charttheorem-rem} about the Chart Theorem)
\beq \begin{CD}\wedge[\phi]@>\psi^{\sharp} >> \C^\infty(\R^n)\otimes \wedge[\eta]\\
\phi @>>>\psi^{\sharp}(\phi)=f(x)\eta\,.\end{CD} \label{naive}
\eeq
The symbol $\wedge[\phi]$ stands for the Grassmann or exterior algebra in the variable $\phi$.
In physics one needs to compute quantities such as
\beq \psi(x),\,\dots\quad ,\psi(x_1)\psi(x_2)\cdots \psi(x_m),\dots ,\frac{\partial \psi(x)}{\partial x^\mu},\,\dots\quad ,\frac{\partial^{n_1} \psi(x)}{\partial (x^{\mu_1})^{n_1}}\cdots
\frac{\partial^{n_k} \psi(x)}{\partial(x^{\mu_k})^{n_k}},\, \dots \label{physcal}\eeq
where the superindex labels the coordinates of spacetime and the subindex labels diferent points of spacetime.

These quantities, with products of any number of derivatives,  appear in Lagrangians and  other observables, and products of fields at different points of spacetime are the classical limit of correlation functions. They are always assumed to be, generically, different from zero.

It is clear that these calculations could not be reproduced with a simple object such as $\psi(x)=f(x)\eta$. All the quantities written above would be identically zero.  It seems that physicists and mathematicians, although not disagreeing on the nature of supermanifolds and supervarieties,  are not considering the same object when speaking about superfields or just odd fields.
Often this subtlety has been overlooked,  but we think that it deserves special attention. There must be an appropriate object, in the context of the mathematicians approach to supergeometry, that reproduces the properties of a superfield.

Not always supersymmetric theories are formulated in terms of superfields: for a high number of supersymmetries (that is, many odd coordinates in super spacetime) there appear too many component fields which would have to be constrained. Since it is not easy to find appropriate  constraints (one can even wonder if they exist), physicists devised other methods to implement supersymmetry in field theory. Nevertheless, whenever it is possible, superfield techniques are very powerful and highly desirable. In that case, the primordial objects from which everything else is derived, including the geometry of superspaces at play are superfields. Presumably,  one should be able to recover the original definition of supermanifold with its sheaf of superalgebras in terms of superfields. In Section \ref{evenrules-sec} we will demonstrate that this is possible with the use of the {\it even rules principle} explained in that same section.

In the second part of the paper, Sections \ref{algebraic-sec} and \ref{nonalgebraic-sec}, we deal with a variety of constraints imposed on the space of superfields. These are constraints that have been used in some supergravity inspired cosmological models (the system studied in Section \ref{newconstraint-sec} appeared for the first time in a preliminary version of this paper). To understand the meaning of these constraints was what got this work started. They provide a wonderful example on how one has to be careful when speaking about superfields. We will see that the use of {\it superschemes} is necessary to provide  an adequate mathematical framework for them and to elucidate its behaviour under supersymmetry transformations.

 Finally, in Section  \ref{observables-sec} we comment on the  interpretation of fermionic observables in the classical and quantum realms.

\bigskip

In the text we have tried to introduce the basic notions of algebraic geometry that are required to understand the generalization to the super setting. Some more basic concepts, as the definition or sheaf, etc are given in the Appendix \ref{basic-ap}. We have tried to give a consistent account of these concepts as a guide for the reader, but this paper is not a suitable place to introduce oneself to algebraic geometry, for which many good textbooks exist (particularly useful for us has been  Ref. \cite{eh}). For the super setting, introductory references are Refs. \cite{dm, va, fl} and a more detailed monograph is Ref. \cite{ccf}.
Physics conventions regarding spinor notation and supersymmetry transformations are given in Appendices \ref{spinors-ap} and \ref{susy-ap}.

\section{ Superspaces and scalar superfields}\label{sclarsf-sec}

\subsection {Some mathematical definitions} In this section we will consider the simplest case: an unconstrained, scalar superfield. The ground field will be $\R$

We need first some mathematic terminology. In Appendix \ref{basic-ap} we recall the standard definitions of {\it sheaf} over a topological space and of {\it morphism of sheaves}  (Definitions \ref{sheaf-def} and \ref{sheafmor-def}), which are used in what follows. The reader interested in a more complete and deep treatment of the subject in the  formalism that we use, can consult for example Refs.  \cite{dm,va,ccf,fl}.

 \begin{definition}\label{superspace-def}A {\it superspace}\footnote{The concept of superspace here is more general than the one used in physics, where usually restricts to Minkowski super spacetime.} $S=(|S|,\cO_S)$ is a topological space $|S|$ endowed with a sheaf of superalgebras $\cO_S$ such that the stalk at each point $x\in |S|$, denoted as $\cO_{S,x}$, is a  local superalgebra (it has a unique maximal ideal). The sheaf $\cO_S$ is the {\it structural sheaf} of the superspace $S$.

 \hfill$\square$
 \end{definition}

 The elements of $\cO_S(U)$, for $U\subset_{\mathrm{open}}|S|$ are {\it local sections over $U$}. If the open set is the total space $|S|$, then the  elements of $\cO_S(|S|)$ are called {\it  global sections}. The superalgebra $\cO_S(|S|)$ is also called the {\it coordinate superalgebra} of the superspace $S$, and  it is denoted simply as $\cO(S)$.

 \begin{example}\label{affinesuperspace}The {\it affine superspace} $A^{m|n}$ consists of the topological space $\R^m$  with the sheaf of superalgebras that, for any open set $U\in \R^{m}$, attaches the superalgebra
 $\cO^{m|n}(U):=C^\infty(U)\otimes \wedge[\theta^1, \dots ,\theta^n]\,.$

  \hfill$\square$
 \end{example}

Let  $ x^1, \dots, x^m$ be global coordinates on $\R^m$. Then we say that $ x^1, \dots, x^m, \\\theta^1,\dots ,\theta^n \in \cO(A^{m|n})$ are {\it global coordinates} on $A^{m|n}$.  Here, and as a general rule, Latin letters denote even (commuting) quantities and Greek letters denote odd (anticommuting) quantities.

 It is important to pay attention to the definition of morphisms of superspaces. One can define them as morphisms of the corresponding sheaves but first one has to put them over the same basis. This is done by using the pullback sheaf.

 \begin{definition}\label{morphism}
 A {\it morphism} $f:S\rightarrow T$  of superspaces is given by a pair $(|f|, f^\sharp:)$ where $|f|:|S|\rightarrow  |T|$ is a continuous function $f^\sharp:\cO_T\rightarrow |f|_*\cO_S$ is a morphism of sheaves that preserves the maximal ideal of the stalks. The sheaf $|f|_*\cO_S$ is the pullback by $|f|$ of the sheaf $\cO_S$, defined, for any open $U \subset T$, as
 $$|f|_*\cO_S(U)=\cO_S(|f|^{-1}(U))\,.$$

 \hfill$\square$
 \end{definition}

 \begin{remark}\label{charttheorem-rem}

 It is not difficult to show that a morphism  $f:A^{m|n}\rightarrow A^{p|q}$ is determined by the images of the global coordinates $(x^1, \dots, x^p;\theta^1,\dots ,\theta^q)$
  of $A^{p|q}$ under $f^\sharp$
  $$\begin{CD}
  \cO^{p|q}(\R^p) @>f^\sharp>>  \cO^{m|n}(\R^m)\\
  x^1, \dots , x^p@>>> f^1, \dots ,f^p\\
  \theta^1, \dots, \theta^q@>>> \eta^1,\dots ,\eta^q\,,
  \end{CD}$$

  \smallskip

 \noindent where $f^1, \dots ,f^p$ are even elements of $\cO^{m|n}(\R^m)$ and  $\eta^1,\dots ,\eta^q$ are odd elements of $ \cO^{m|n}(\R^m)$.
  This is the   {\it Chart Theorem} (see for example Theorem 4.1.11 in                            Ref \cite{ccf}).

  \hfill$\square$

 \end{remark}

 We will denote as $\rss$ the category of superspaces with their morphisms as defined above.

\begin{definition} A {\it supermanifold} of dimension $m|n$ is a superspace that is locally isomorphic to $A^{m|n}$.

  \hfill$\square$
 \end{definition}

 Of special importance for us will be the concept of  the {\it functor of points} associated to supermanifolds (and, later on, to superschemes).

 \begin{definition}{\sl Functor of points.} \label{functorofpoints-def}Let us denote as $\rsm$ the category of supermanifolds with morphisms the superspace morphisms such that, in the notation of Definition \ref{morphism}, $|f|:|S|\rightarrow  |T|$
 is a $C^\infty$ function. Let $S$ and $X$ be supermanifolds. A morphism $\varphi:S\rightarrow X$ is called an $S$-point of $X$.  The set of $S$-points of $X$ is the set $\Hom(S, X)$.

   The association
 $$\begin{CD} \rsm@>F_X>>\sets\\
 S@>>>\Hom(S, X)\end{CD}$$
 is a contravariant functor in $S$ called the functor of points of $X$.

 \hfill$\square$

 \end{definition}

 When $S$ is taken as the spectrum of the field $k=\R$ one is left with ordinary points of the topological space underlying the supermanifold. While these points are enough to recover the ordinary manifold, it is not so with a supermanifold, where, in principle, one would need  the $S$-points for any supermanifold $S$.

 If $Y$ is another supermanifold and we have a morphism $f:X\rightarrow Y$ then we can associate to each $S$-point of $X$, $\varphi: S\rightarrow X$, an $S$-point of $Y$ by composition, namely $f\circ \varphi$. So one can talk about the map $f$ as we do in the classical case,  as sending `points of $X$' to `points of $Y$'. It follows from Yoneda's lemma that it is equivalent to give the actual map $f$ as a map of ringed spaces or the natural transformation between the functors of points.

\subsection{Superfields}\label{superfields-sec}

In order to introduce the discussion on superfields we will first  describe heuristically  a toy model.

 \begin{paragraph}{\sl \bf Scalar superfields on $A^{1|1}$.} We consider the simplest model of superspace having both, even and odd components, the affine superspace over $\R$, $A^{1|1}=(\R, \cO^{1|1})$  with structural sheaf
 $$\cO^{1|1}(U)=C^\infty(U)\otimes \wedge[\theta],\qquad U\subset_{\mathrm{open}}\R\,.$$    The algebra of global sections will be denoted as $$\cO(A^{1|1}):= C^\infty(\R)\otimes \wedge[\theta]\,,$$
with $x$ and $\theta$  the global coordinates on $\R^{1|1}$. Generically,  a global section can be written as
 \beq \tilde\Phi=\tilde{A}+\tilde{G}\,\theta\,,\label{section}\eeq
 with $\tilde{A}, \tilde{G} \in C^\infty(\R)$.

Formally, (\ref{section}) looks like the superfields that appear in the physics literature, except for the fact that both, $\tilde{A}$ and $\tilde{G}$, are  ordinary (even) smooth functions on $\R$.

  Let us now consider $S_n=A^{0|n}$, a superspace   whose topological space is just a point and the structural sheaf is given by the Grassmann algebra over $\R$ in $n$ generators $\wedge[\xi^1,\dots ,\xi^n]$  (a {\it superpoint}).
  We consider a morphism
\beq  \begin{CD}S_n\times M@>\phi>>\R\,.\end{CD}\label{superpoint}\eeq
 Using the Chart Theorem \ref{charttheorem-rem} we can express it terms of the algebras and global coordinates as
  $$ \begin{CD}C^\infty(\R)@>\phi^\sharp>>\C^\infty(\R)\otimes \wedge[\theta]\otimes \wedge[\xi^1,\dots \xi^n]\\
  z@>>>\Phi:=\phi^\sharp(z)=A+\chi\,\theta \end{CD}$$ where  $A\in \bigl(\C^\infty(\R)\otimes \wedge[\xi^1,\dots \xi^n]\bigr)_0$ and $\chi\in \bigl(\C^\infty(\R)\otimes \wedge[\xi^1,\dots \xi^n]\bigr)_1$.

  \smallskip

 $A$ and $\chi$ are quantities that could represent the {\it component fields} of the {\it superfield} $\Phi$, being $A$ a bosonic (even) field and $\chi$ a fermionic (odd) field.  We are aware that we have introduced extra odd variables $\xi_1, \dots , \xi_n$ and we will argue why they are spurious, that is, they have no physical meaning.

  For the moment, let us see how this helps with the problem stated in the introduction. Let us consider $n=2$ and  compute some of the quantities in (\ref{physcal}) for the odd field $\chi$. We have
$$A(x)=A_0(x)+A_{12}(x)\xi^1\xi^2,\qquad \chi(x)= G_1(x)\xi^1+G_2(x)\xi^2\,,$$
with $A_0, A_{12}, G_1, G_2 \in \C^\infty(\R)$:

\begin{align*}&\chi(x)\chi(x')=(G_1(x)G_2(x')-G_2(x)G_1(x'))\xi^1\xi^2,\\ &\chi(x)\dot \chi(x)=(G_1(x)\dot G_2(x)-G_2(x)\dot G_1(x))\xi^1\xi^2\,,\end{align*} which, generically, are different from zero. These properties are used in classical field theory and have a (deformed, as $\hbar\neq 0$) counterpart in quantum field theory.

It is clear now that if one wants to have products of the same field in $n$ different points, or products of the field and its derivatives that are not identically 0, one just needs to increase the number of odd generators $\xi^i$ up to $n$. In order to achieve full generality, one is then lead to consider all the Grassmann algebras. This is reminiscent of the use of infinitely many variables in that DeWitt--Rogers approach, introduced to give a meaning to the  calculations done by physicists. It is seems clear from this example that one cannot escape from considering infinitely many odd variables that are non physical in some way.

\hfill$\blacksquare$

 \end{paragraph}

 The purpose of defining superfields is better served if one considers in (\ref{superpoint})  arbitrary, finite dimensional supermanifolds $S$ instead of only Grassmann algebras (or rather, the superpoints they define) and then require functoriality in $S$. This natural approach to superfields, that generalizes and improves our previous definition in terms of Grassmann algebras is not completely unknown to mathematicians, and it was was suggested to us by P. Deligne\footnote{ We are grateful to him for this precious input.} (see also Ref. \cite{bk}). However, in our experience,  it is not widespread knowdledge. Certainly it is ignored by the physics community and we think that tending a bridge between different ways of thinking is worthwhile. We will dedicate some time to explain it in detail.

On the other hand, and perhaps paradoxically in view of (\ref{physcal}), it is  useful to consider first the case where there is no dependence on the coordinates of spacetime: this means that spacetime is just a point and the  structural sheaf of $M$ is given, in the only open set (the point), by a certain finite dimensional superalgebra.

\begin{paragraph}{ \bf Spacetime is a point.}  We first need to introduce the concept of {\it adjoint} of a functor.

\begin{definition}Given two categories $\cC$ and $\cD$ and two functors
\begin{diagram}
\cC & \pile{\rTo^F\\  \lTo_G} & \cD\,,\\
\end{diagram}
we say that the functor $G$ is a right adjoint to the functor $F$ or that $F$ is a left adjoint to the functor $G$ if there is a bijection
\beq \Hom_D(F(X),Y)\cong\Hom_C(X,G(Y))\,,\label{adjointfunctor}\eeq
for all objects $X$ in $\cC$ and all objects $Y$ in $\cD$. Varying $X$ and $Y$ we have a family of bijections
$$\Hom_D(F(\,\_\,),\,\_\,)\cong\Hom_C(\,\_\,,G(\,\_\,))\,.$$

\hfill$\square$
\end{definition}

\begin{example}Let us consider the category (sets) of sets. Let $A$ be a fixed set and consider the functor that takes the Cartesian product by $A$
$$\begin{CD}
\sets@>F>>\sets\\
X@>>>X\times A\,.
\end{CD}$$

We have $$\Hom(X\times A, Y)\cong\Hom \bigl(X, \Hom(A, Y)\bigr)\,,$$
So the functor
$$\begin{CD}
\sets@>G>>\sets\\
Y@>>>\Hom(A, Y)\,
\end{CD}$$
is the right adjoint to to $F$.

\hfill$\square$
\end{example}

\begin{example}Let $A$ be a commutative superalgebra and let $\cC$ be the category of $A$-modules. $\cC$ has a tensor product defined in it \cite{dm}. For a certain $A$-module $N$ we consider the functor
$$\begin{CD}
\cC@>F>>\cC
\\
M@>>>M\otimes_A N\,.
\end{CD}$$
This functor has a right adjoint
$$\begin{CD}
\cC@>G>>\cC
\\
P@>>>\cHom_A(N, P)\,,
\end{CD}$$
where $\cHom_A(N, P)$ is the set of all $A$-linear maps $N\rightarrow P$. Notice that we have used a different notation than for the usual module morphisms $\Hom_A(N, P)$, which preserve parity.   The set $\cHom_A(N, P)$ is itself an $A$-module. Because of this, its elements are called {\it inner morphisms} (the set of inner morphisms is inside the category of $A$-modules $\cC$). The even part of $\cHom_A(N, P)$ are just the standard module morphisms.  The adjunction formula reads

$$\Hom_A(M\otimes_AN, P)=\Hom(M, \cHom(N,P))\,.$$

A very similar construction is the one that we will use for supermanifolds.

\hfill$\square$
\end{example}

The category of supermanifolds admits categorical products. We will not enter in the details here, which can be found, for example, in Ref. \cite{va}, page 137. Essentially, if $X$ and $Y$ are two supermanifolds, we have that  $|X\times Y|=|X|\times|Y|$, and the sheaf  can be determined by gluing the local tensor products $\cO_X(U)\hat\otimes \cO_Y(V)$, where $U$ and $V$ are rectangular open sets admitting global coordinates and $\hat \otimes$ is the completion of the tensor product.

The cases that we will treat here, spacetime being a point, are specially easy in that respect.

So, let $M$ be a superspace with underlying topological space a point. We define the functor $F$ as
\beq\begin{CD}\rsm@>F>>\rsm\\
S@>>>S\times M
\,.\label{adjoint}\end{CD}\eeq  $F$ just  takes the cartesian product by $M$.

 This functor has a right adjoint since $\rsm$ is a closed monoidal category \cite{bk}. The adjoint functor associates to every supermanifold $Y$ the supermanifold of  {\it inner morphisms} from $M$ to $Y$, denoted as  $\cHom(M, Y)$
$$\begin{CD}\rsm@>G>>\rsm\\
Y@>>>\cHom(M, Y)
\,,\label{adjoint2}\end{CD}$$
which is defined through the formula (\ref{adjointfunctor})
$$\Hom(S\times M,Y)\cong\Hom(S,\cHom(M, Y))\,.$$

The functor $S\rightarrow \Hom(S\times M,Y)$,  that we were considering heuristically in (\ref{superpoint}) is in fact the functor of points (Definition \ref{functorofpoints-def}) of the {\it supermanifold of inner morphisms} $\cHom(M, Y)$.

It is more clear now why we have chosen $M$ to be of such restricted type first. For $M$ an arbitrary supermanifold, a superspace like $\cHom(M, Y)$ would be infinite dimensional, which is considerably more difficult to describe. In \cite{bk}, for example, infinite dimensional superspaces of maps between supermanifolds are studied in depth, but we will stay with this simpler description. We will come back to this issue later.

\medskip

It is instructive to compute it for a particularly simple example.

 \begin{example} Let us set $M=A^{0|1}$ with coordinate $\theta$ and  and $Y=\R$ with coordinate $z$. A morphism $S\times A^{0|1}\rightarrow \R$ is given in terms of the global coordinates as
 \beq\begin{CD}\cO(\R)@>>>\cO(S)\otimes \cO^{0|1}(\R)\\
 z@>>> f+\eta\theta, \end{CD} \qquad f\in \cO(S)_0,\; \eta\in \cO(S)_1\,.\label{family0}
 \eeq

It is then clear that $\cHom(M, \R)\cong A^{1|1}$, since any morphism $S\rightarrow A^{1|1}$ is given in terms of the global coordinates as
$$\begin{CD} \cO^{1|1}(\R)@>>>\cO(S)\\
t, \,\tau @>>> f, \,\eta  \,.\end{CD}$$

It amounts to the same to give a map $S\times A^{0|1}\rightarrow \R$ than to give a map $S\rightarrow A^{1|1}$.

\hfill$\square$
 \end{example}

More generally, we can take $M=A^{0|q}$, with coordinates $\theta_1,\theta_2,\dots, \theta_q$ and target manifold $\R$. A morphism $S\times A^{0|q} \rightarrow \R$ is given by

$$\begin{CD}\cO(\R)@>>>\cO(S)\otimes \cO^{0|q}(\R)\\
 z@>>> \Phi  \,,\end{CD}
 $$
 with
\beq\Phi= \Phi_0+\Phi_i\theta^i+\Phi_{ij}\theta^i\theta^j+ \Phi_{ijk}\theta^i\theta^j\theta^k+\cdots,\label{expansion}\eeq
(sum over repeated indices is understood). We can restrict the sum to  $i<j<k<\cdots$ due to the antisymmetry of $\theta^i\theta^j\theta^k\cdots$. The component fields are
\beq \Phi_0, \Phi_{ij},\dots \in \cO(S)_0,\qquad \Phi_i, \Phi_{ijk},\dots \in \cO(S)_1\,,\eeq
and $S$ runs over all supermanifolds.
 Then $\cHom(A^{0|q}, \R)\cong A^{2^{q-1}|2^{q-1}}$.

 \medskip
 We observe that the coordinates in $S$  play the role of {\it parameters}. In order to see this, it is useful to take  first an $S$  that is an ordinary manifold. Then, its coordinates are all even. Choosing  a value for them,  the expression (\ref{expansion}) produces a particular inner morphism $M\rightarrow \R$ which, in this case,  would be a supermanifold morphism. In order to obtain the rest of the inner morphisms is necessary to introduce an  $S$ with odd coordinates. We then generalize the concept of `parameter' to include also the odd coordinates. We  say that  (\ref{expansion}) is a {\it family of morphisms $M\rightarrow \R$ parametrized by $S$} \cite{de}.

\smallskip

Let us now denote $\cH=\cHom(M, \R)\cong A^{2^{q-1}|2^{q-1}}$, the {\it  superspace of superfields with target $\R$}. We can consider the family (\ref{expansion}) also over  $\cH$, that is, the family $\cH\times A^{0|q}\rightarrow \R$
\beq \begin{CD}\cO(\R)@>>>\cO(\cH)\otimes \cO^{0|q}(\R)\end{CD}\label{universal}\,.\eeq
In (\ref{expansion}), the coefficients $\Phi_{ijk\cdots}$ can be taken as coordinates of $\cH$.
This family is universal in the sense that, any family of maps (\ref{expansion}) parametrized by $S$ is determined, up to isomorphism, by a map $S\rightarrow \cH$.

We may observe here a similarity with the naive approach (\ref{naive}), where for each of the odd components $\Phi_{ijk\cdots}$  of the superfield, one would introduce an auxiliary odd variable. With this interpretation, in terms of the adjoint functor, we see that the interesting object (the universal one) has also even coordinates, one for each even $\Phi_{ijk\cdots}$.

 \hfill$\square$

  \bigskip

In summary, to speak about superfields we have to speak about the superspace of such superfields, and we have clarified how to define it in terms of its functor of points.
The undetermined number of `auxiliary variables' in other approaches are just coordinates in $S$, and they appear because of the functorial description.

In the toy model considered (spacetime is a point) one can explicitly give the superspace of superfields in terms of sheaves  as in Definition
\ref{superspace-def}. One could then, in principle, avoid the use of the functor of points. Then the  `auxiliary variables'
 ($S$) would not appear, so they cannot have any physical meaning attached.

What we find remarkable  is that the object that physicists have been using all the time, without giving an explanation, was exactly the functor of points
of the superspace of inner morphisms. This is just another example of how the language of the functor of points converges with the physicist's point
of view of supergeometry\footnote{Another clear example is the definition of supergroups, which in all references is given explicitly in terms of the
functor of points without mentioning it.}.
 We hope to contribute with these simple examples to the understanding of the concept of superfield.

\hfill$\blacksquare$

\end{paragraph}

If spacetime is not a point and $M$ is a supermanifold modelled on $A^{p|q}$, the space of inner morphisms becomes infinite dimensional and we may have trouble describing it. We can, though, preserve the idea of a family of maps   $M\rightarrow \R$ parametrized by a supermanifold $S$. This presents no interpretation problem. For $M$ an affine space $A^{p|q}$ we will have an expansion like (\ref{expansion}), where now all the component fields will acquire a dependence on the even coordinates of $M$, $(x^1, \dots, x^p)$. Then, the very definition of superfield in terms of parametrized families takes care of all the properties that physicists use. One can construct Lagrangians, functional actions and use the calculus of variations to determine the field equations, very much in the same way that they are computed in ordinary field theory.

\bigskip

Since in several cases the constraints that we are going to investigate are algebraic, that is, they do not involve spacetime derivatives,
the dependence on the even coordinates can be factored out and everything works as if  spacetime would shrink to a point. Even there, the definition of the superfield that we give will make appear non trivial solutions of the constraints, with a little difficulty added: the superspaces that we will have to consider are not always smooth supermanifolds nor algebraic varieties, and we will have to allow for singular {\it superschemes}

\section{The  even rules principle}\label{evenrules-sec}

In  this section we explain the {\it even rules principle}, a result due to Deligne and Morgan \cite{dm}, and see its implications for superfields.

 We set $k=\R,\C$. The algebras and superalgebras that we consider here are $k$-algebras, unless otherwise stated, and always  have unit.

 We will denote by $\svs$ the category of  super vector spaces  and by $\csa$  the category of commutative (or {\it supercommutative}) superalgebras with unit, both over $k$.

Let $\cA$ be a commutative superalgebra and $M_\cA$ a left $\cA$-module, that is, $M_\cA$ is a super vector space with a morphism of super vector spaces $$\begin{CD}\cA\otimes M_\cA@>>>M_\cA\\a\otimes m@>>>a\cdot m\,,\end{CD}$$ satisfying
$$a\cdot (b\cdot m)=(ab)\cdot m,\qquad a,b\in \cA,\quad m\in  M_\cA\,.$$ Left $\cA$-modules for commutative superalgebras are also right $\cA$-modules by setting
$$\begin{CD}M_\cA\otimes \cA@>>>M_\cA\\m\otimes a@>>>m\cdot a:= (-1)^{p(m)p(a)}a\cdot m\,,\end{CD}$$
where $p(m)$ and $p(a)$ are the parities of $m$ and $a$ respectively. This  satisfies
$$(m\cdot b)\cdot a=m\cdot (b\cdot a),\qquad a,b\in \cA,\quad m\in  M_\cA\,,$$
We will just call them modules.

Let $V$ be an object in $\svs$  and $\cB$ and object in $\csa$.  We denote as $V(\cB):=\cB\otimes V$ the extension of the scalars of $V$ by $\cB$ (see Definition \ref{extensions-def}).
%with morphism the injection
%$$\begin{CD}k@>>>\cB\\t@>>>t\id\,.
%\end{CD}$$
Let $h:\cB\rightarrow \cB'$ be  a morphism of commutative  superalgebras. Then $V(\cB')$ is also a $\cB$-module by further extending the scalars to $\cB'$. There is a morphism  of $\cB$-modules
$$\begin{CD} V(\cB)@>V(h)>>V(\cB')\\
 b\otimes v@>>> h(b)\otimes v\,.\end{CD}$$ This is well defined since $h$ is a superalgebra morphism.

We can as well take the even part of the module.
 If $V=V_0+V_1$ and $\cB=\cB_0+\cB_1$ are the splitting in even and odd parts, we denote
 $V(\cB)_0$ the even part of $V(\cB)$,
  $$V(\cB)_0=\cB_0\otimes V_0\oplus \cB_1\otimes V_1\,,$$
  which is a $\cB_0$-module. Given a morphism of superalgebras $h:\cB\rightarrow \cB'$, then $V(h)_0$ is a morphism of $\cB_0$-modules
$$\begin{CD} V(\cB)_{0}@>V(h)_0>>V(\cB')_0\\
 b\otimes v@>>> h(b)\otimes v\,.\end{CD}\label{extmor}$$

 \medskip

\begin{remark}\label{diagrams} It will be useful here to recall the definition of superalgebra in terms of commutative diagrams.

Let $V$ be a super vector space.  An associative superalgebra structure on $V$ is given by a linear map, {\it the product}:
 $$\begin{CD}V\otimes V@>\pi>>V\end{CD}$$
 such that
 $$p(\left(\pi(u\otimes v)\right)=p(u)+p(v)$$
and satisfying
 the  associativity property, that is, that the diagram
$$
\begin{diagram}[notextflow]
                    &                      &V\otimes V &        & \\
                    & \ruTo^{\pi\otimes\id} &           &\rdTo^\pi & \\
V\otimes V\otimes V &                      &           &        & V\\
                    & \rdTo_{\id\otimes \pi}   &           &\ruTo_\pi & \\
                    &                      &V\otimes V &        &
\end{diagram}$$
is commutative. On the other hand, we say that the superalgebra is commutative if, given the {\it flip map},
$$\begin{CD} V\otimes V@>c_{V,V}>>V\otimes V\\
v\otimes w@>>>(-1)^{p_vp_w}w\otimes v,\end{CD}$$
the diagram

$$
\begin{diagram}[notextflow]
             &                &V\otimes V \\
             &\ruTo^{c_{V,V}} &  \dTo_\pi    \\
V\otimes V   &  \rTo_\pi              &     V
\end{diagram}$$
commutes.

\hfill$\square$
\label{commdiag}
\end{remark}

 If $V$ has a $k$-superalgebra structure, then $V(\cB)$ has a $\cB$-algebra structure naturally: Let $\pi:V\otimes V\rightarrow V$ be the product on $V$, then
 we have a product\footnote{We use here the direct product and not the tensor product because the bilinearity is in $\cB_0$ and not only in $k$.}
 \beq \begin{CD}V(\cB)\times V(\cB)@>\pi_\cB>> V(\cB)\end{CD}\label{products}\eeq given simply by
 $$ \pi_\cB\left(b_1\otimes v_1, b_2\otimes v_2\right)=(-1)^{p(v_1)p(b_2)} b_1b_2\,\pi(v_1\otimes v_2)\,.$$  It is straightforward to check that the associativity and commutativity properties are satisfied.

\begin{definition}\label{functorialprop} Let $V$ and $W$ be two superspaces. We say that a family of morphisms $$\bigl\{\,f_\cB: V(\cB)\rightarrow W(\cB),\quad \cB\in\csa\,\bigr\}$$ is {\it functorial} in $\cB$ if
 given a superalgebra morphism $$\begin{CD}\cB@>h>> \cB'\end{CD} $$ the diagram
$$\begin{CD}
V(\cB)@>f_{\cB}>>W(\cB)\\
@V{ V(h)} VV@V V{W(h)} V\\
V(\cB')@>f_{\cB'}>>W(\cB')\,.
\end{CD}$$
commutes.
\hfill$\square$
\end{definition}
In particular, if $V$ has a superalgebra structure, it is not difficult to see that the family $\{\,\pi_\cB:V(\cB)\otimes V(\cB)\rightarrow V(\cB)\,\}$ is functorial in $\cB$. The same is true for the families of maps appearing in the associativity and commutativity diagrams for the algebras $V(\cB)$.

\medskip

The following theorem of Deligne and Morgan (Ref.\cite{dm}, page 56) will allow us to recover the defining sheaf in terms of scalar superfields.

\begin{theorem} \label{erp}{\sl Even rules principle.} Let $\{V_i\}_{i\in I}$, $I=1, \dots , n$ be a family of super vector spaces, $V$ another super vector space and  $\cB=\cB_0\oplus \cB_1$ a commutative superalgebra. As before, we denote
$ V_i(\cB)_0=(\cB\otimes V_i)_0$ and $V(\cB)_0=(\cB\otimes V)_0$.

Any family of $\cB_0$-multilinear maps
$$\begin{CD}V_{1}(\cB)_0\times \cdots \times V_{n}(\cB)_0@>f_\cB>> V(\cB)_0\end{CD}$$
which is {\it functorial} in $\cB$ comes from a unique  morphism
$$\begin{CD}V_1\otimes \cdots\otimes V_n@>f>>  V\end{CD}$$ as in (\ref{products}), that is,
$$f_\cB(b_1\otimes v_1, b_2\otimes v_2,\dots, b_n\otimes v_n)=(-1)^pb_1\cdots b_n\, f(v_1\otimes\cdots \otimes v_n)\,, $$ where $p$ is the number of pairs $(i,j)$ with $i<j$ and $v_i, v_j$ odd.

\end{theorem}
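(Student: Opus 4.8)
The plan is to reduce the whole statement to one universal test algebra, the Grassmann algebra $\Lambda := k[\xi_1, \dots, \xi_n]$ on $n$ odd generators, and to feed functoriality with two kinds of morphisms out of it: the rescalings $h_\la : \xi_i \mapsto \la_i \xi_i$ (with $\la_i \in k^\times$) and the evaluations $h : \xi_i \mapsto b_i$. Uniqueness is immediate from this viewpoint. For homogeneous $v_i$ of parity $p_i$, put $x_i := \xi_i \otimes v_i$ if $p_i = 1$ and $x_i := 1 \otimes v_i$ if $p_i = 0$; each $x_i$ is even and so lies in $V_{i\,0}(\Lambda)$. The asserted formula, evaluated at $\cB = \Lambda$ on the inputs $x_1, \dots, x_n$, expresses $f(v_1 \otimes \cdots \otimes v_n)$ as the (nonzero) coefficient of a basis monomial of $\Lambda$ in $f_\Lambda(x_1, \dots, x_n)$. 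Hence $f$ is forced on all homogeneous tensors, which span $V_1 \otimes \cdots \otimes V_n$, so at most one such $f$ can exist.

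For existence I would use exactly these inputs to construct $f$. A priori $f_\Lambda(x_1, \dots, x_n)$ is an even element $\sum_{S} \xi_S \otimes w_S$ of $\Lambda \otimes V$, the sum running over subsets $S \subseteq \{1, \dots, n\}$ with $\xi_S$ the increasing product of the corresponding generators. The heart of the argument --- and the step I expect to be the main obstacle --- is to show that only $S = J_0 := \{i : p_i = 1\}$ survives. I would obtain this from the rescalings: $h_\la$ is a genuine superalgebra automorphism (odd elements anticommute and square to zero, so the defining relations are preserved), and $V(h_\la)$ multiplies $x_i$ by $\la_i^{p_i}$. By $\cB_0$-multilinearity the left side of the functoriality square becomes $(\prod_i \la_i^{p_i}) f_\Lambda(x_1, \dots, x_n)$, while the right side multiplies the $S$-th summand by $\prod_{i \in S} \la_i$. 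Since over the infinite field $k$ distinct monomials in $\la_1, \dots, \la_n$ are linearly independent functions, comparing the coefficient of each basis vector $\xi_S$ forces $w_S = 0$ for all $S \neq J_0$. Thus $f_\Lambda(x_1, \dots, x_n) = \xi_{J_0} \otimes w$ for a unique $w \in V$, and I define $f(v_1 \otimes \cdots \otimes v_n) := (-1)^p w$. The $k$-linearity of $f_\Lambda$ in each slot makes $w$, and hence $f$, multilinear in the homogeneous $v_i$, so $f$ extends to a linear map $V_1 \otimes \cdots \otimes V_n \to V$; and $p(w) = |J_0| = \sum_i p_i \bmod 2$ shows $f$ preserves parity, i.e. it is a morphism in $\svs$.

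It then remains to check the formula for an arbitrary $\cB$; by $\cB_0$-multilinearity it suffices to treat simple homogeneous even inputs $b_i \otimes v_i \in V_{i\,0}(\cB)$. The even slots ($b_i \in \cB_0$) are pulled out as central scalars, reducing them to $b_i = 1$. For the odd slots I would use the evaluation $h : \Lambda \to \cB$, $\xi_i \mapsto b_i$, which is a legitimate superalgebra morphism precisely because each $b_i \in \cB_1$ satisfies $b_i b_j = -b_j b_i$ and $b_i^2 = 0$ in the commutative superalgebra $\cB$. Applying functoriality (Definition \ref{functorialprop}) to $h$ transports the universal identity $f_\Lambda(x_1, \dots, x_n) = \xi_{J_0} \otimes w$ to $f_\cB(b_1 \otimes v_1, \dots, b_n \otimes v_n) = b_1 \cdots b_n\, w = (-1)^p b_1 \cdots b_n\, f(v_1 \otimes \cdots \otimes v_n)$, which is the claim. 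The sign $(-1)^p$ is the Koszul sign already built into the normalization (\ref{functorialproduct}); with the definition of $f$ above it is carried along unchanged by the transport, and confirming this amounts only to a routine bookkeeping of signs.
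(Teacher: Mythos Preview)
Your proposal is correct and in fact goes well beyond what the paper offers. The paper does not give a complete proof; it explicitly defers to Deligne--Morgan and only illustrates, in the bilinear case $n=2$, how $f$ can be \emph{read off} from $f_\cB$ by evaluating on Grassmann algebras with one or two generators. That is exactly your uniqueness paragraph, specialized to $n=2$: feed in $x_i=\xi_i\otimes v_i$ (or $1\otimes v_i$) and extract $f$ as the coefficient of the top monomial. So on the part the paper actually treats, you and the paper take the same route.

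What you add, and what the paper omits, is the existence/well-definedness half. Your rescaling argument with the automorphisms $h_\lambda:\xi_i\mapsto\lambda_i\xi_i$ is the genuine extra idea: it is precisely what forces $f_\Lambda(x_1,\dots,x_n)$ to be concentrated in the single monomial $\xi_{J_0}$, a point the paper's sketch takes for granted. The transport to a general $\cB$ via the evaluation morphism is also something the paper does not spell out. One small thing to tighten: your evaluation $h:\Lambda\to\cB$ is only specified on the $\xi_i$ with $i\in J_0$; you should say that the remaining generators are sent to $0$ (or to anything in $\cB_1$), which is harmless since for those indices $x_i=1\otimes v_i$ is fixed by $V_i(h)$. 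With that, the argument is clean: the infiniteness of $k=\R,\C$ justifies the monomial-comparison step, and the sign bookkeeping you allude to does indeed reduce to $w=(-1)^p f(v_1\otimes\cdots\otimes v_n)$ and $b_1\cdots b_n=(\prod_{p_i=0}b_i)(\prod_{i\in J_0}b_i)$ because the even $b_i$ are central.
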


\begin{proof}

 We will not prove the theorem here (see Ref. \cite{dm}), but it is  instructive to see how the map $f$ can be recovered from the family of maps $f_\cB$. Let us consider the simple case of a family of maps
$$\begin{CD}V(\cB)_0\otimes V(\cB)_0@>f_\cB>>V(\cB)_0\,,\end{CD}$$ then we have three possible cases:
\begin{enumerate}
\item $v_1, v_2$ even. Then we may take $b_1=b_2=1$ (in an arbitrary algebra $\cB$) and
$$f(v_1\otimes v_2):=f_\cB(v_1\otimes v_2)\,.$$
\item $v_1$ even, $v_2$ odd. Then we take  for example $\cB=\Lambda[\xi]$, $b_1=1$ and $b_2=\xi$. The equality
    $$f_\cB(v_1\otimes \xi v_2)=\xi f(v_1\otimes v_2)$$ determines $f(v_1\otimes v_2)$.
\item $v_1$ odd, $v_2$ odd. It is enough to consider $\cB=\Lambda[\xi^1, \xi^2]$ and the equality
$$f_\cB(\xi^1v_1\otimes \xi^2 v_2)=-\xi^1\xi^2 f(v_1\otimes v_2)$$ determines $f(v_1\otimes v_2)$.
\end{enumerate}
\end{proof}

One way  to give a structure of superalgebra to $V$ is to give a $\cB_0$-algebra structure on $V(\cB)_0$ which varies functorially with $\cB$. From the commutative diagrams in Remark \ref{commdiag}, it is clear that the superalgebra will be associative or commutative if the algebra structures on $V(\cB)_0$ are so.

\begin{example}{\sl Toy model.} \label{toymodel0}Let us consider the vector space over $k=\R, \C$ with basis one even vector $e$ and one odd vector $\theta$.
$$V^{1|1}=\rspan_k \{e, \theta\}\,.$$ Let $\cB$ be a commutative superalgebra. An element of $V_0^{1|1}(\cB)$ will be of the form
$$\Phi_\cB=b_0\otimes e + b_1\otimes\theta,\qquad b_0\in \cB_0,\quad b_1\in\cB_1\,.$$ Let $\Psi_\cB=a_0\otimes e + ba_1\otimes\theta$ another element of $V^{1|1}(\cB)_0$. We can define a product on $V^{1|1}(\cB)_0$ as
$$\Psi_\cB\,\bullet_\cB\,\Phi_\cB=b_0a_0\otimes e + (b_0 a_1+a_1b_0)\otimes\theta\,.$$ It is immediate to check the functorial property (Definition \ref{functorialprop}) of the whole family of products  $\{\,\bullet_\cB\,\}$: for a morphism $h:\cB\rightarrow\cB'$, we have the map
$$\begin{CD}V^{1|1}(\cB)_0@>V^{1|1}(h)_0>> V^{1|1}(\cB')_0\end{CD}\,,$$ that, for short and without risk of confusion, we will call simply $h$
\begin{align*}&h(\Psi_\cB\,\bullet_\cB\,\Phi_\cB)=h(b_0a_0)\otimes e+h(b_0 a_1+a_1b_0)\otimes\theta=\\&h(b_0)h(a_0)\otimes e+\left(h(b_0)h( a_1)+h(a_1)h(b_0)\right)\otimes\theta=\\&
h\bigl(\Phi_{\cB}\bigr)\,\bullet_{\cB'}\,h\bigl(\Psi_{\cB}\bigr)\,.
\end{align*}
One can also check the functoriality for the associativity and commutativity diagrams (Remark \ref{diagrams}). The algebra structure defined in the superspace $V^{1|1}$ converts it into the Grassmann algebra in one variable $\wedge[\theta]$.

\hfill$\square$
\end{example}

\begin{remark}\label{Grassmann-remark} \cite{dm} The same result can be obtained if, instead of considering arbitrary commutative superalgebras, one restricts to the subcategory of Grassmann algebras $\cB=\wedge[\xi^1,\dots, \xi^n]$, for short $\wedge^n$,  for $n$ arbitrarily large, but finite.
For $V$ a superalgebra (as in (\ref{toymodel0})), the object $(\wedge^n\otimes V)_0$ (without specifying  $n$), appears in the literature as the  {\it Grassmann envelope} of the superalgebra $V$.

\bigskip

\hfill$\square$
\end{remark}

Let $M$ be, as above, super spacetime. We consider   two  families of maps parametrized by $S$,
$S\times M\rightarrow\C$, which in terms of the algebras read
$$\begin{CD}\cO(\C)@>>>\cO(S)\otimes\cO(M)\\
z@>>>\Phi\end{CD}\qquad \begin{CD}\cO(\C)@>>>\cO(S)\otimes\cO(M)\\
z@>>>\Psi\end{CD}$$
where $\Phi$ and $\Psi$ are even sections of $\cO(S)\otimes\cO(M)$. The section given by the product  $\Phi\cdot \Psi$ defines another such family. By identifying $\cO(S)=\cB$ and $\cO(M)=V$, and applying  Theorem \ref{erp}, we see that it is the same giving the product in $V$ than giving the functorial family of maps $$\begin{CD}V(\cB)_0\otimes V(\cB)_0@>f_\cB>> V(\cB)_0\\
\Phi\otimes \Psi@>>>\Phi\cdot \Psi\,.\end{CD}$$
In physics,  it is the second option that is chosen and it is called the product of superfields,  although the nature of $\cB$ and the functoriality of $f_\cB$ is not clarified.

\section{How to deal with algebraic constraints.}\label{algebraic-sec}

The examples that we would like to analyze  are the result of some algebraic constraints imposed on a certain set of $N=1$, $D=4$ chiral superfields. In physics, one needs to consider the complexification of the affine space $A^{4|2}$ over the reals. This is because the action of the Lorentz group: the two odd components form a Weyl spinor and this representation is complex. In order to obtain a real form one needs to consider  $A^{4|4}$ over the complex numbers and then one can impose a reality condition compatible with the action of the Lorentz group. Two of the odd variables transform as a chiral (Weyl) spinor and the other two as an antichiral spinor.
So, in what follows we will consider complex, affine  superalgebras. This does not mean that we are considering algebras of holomorphic functions: we will consider all the real analytic ones, so we see $\C^4\cong \R^8$.

The super spacetime under consideration is  $M=A^{4|2}$, the {\it chiral superspace}, with global sections $$\cO(A^{4|2})=C^\infty(\C^4)\otimes \wedge[\theta^1,\theta^2]\,.$$

A {\it chiral superfield} associates to each supermanifold $S$ a family of morphisms parametrized by $S$
$$\begin{CD}S\times M@>>> \C\,,\end{CD}$$ given in terms of an even section of $\cO(S)\otimes\cO(M)$ as in (\ref{expansion})
\beq\Phi=A+\theta^\alpha{\chi}_\alpha + \theta^\alpha\theta_\alpha F,\qquad \alpha=1,2\,.\label{chiralsuperfield}\eeq
 The index notation is the usual in physics  and it is explained in Appendix \ref{spinors-ap}; in particular, sum over repeated indices is understood.  The component fields $A, F$ and $\chi_\alpha$ depend on the even coordinates of the chiral superspace, although we do not write it explicitly.

We want now to study  algebraic constraints on chiral superfields. Spacetime dependence is untouched by the constraints that we will consider, so one can  effectively consider that spacetime is reduced to a point: the full picture is recovered by tensoring with $C^\infty(\C^4)$ when needed.

If spacetime is a point, then the superspace of superfields, that is, the space of inner morphisms $M\rightarrow \C$ can be identified with $\cH=A^{2|2}$, being the even global coordinates $A$ and $F$ in (\ref{chiralsuperfield}) and $\chi_\alpha$, $\alpha=1,2$ the odd ones. The constraints will give relations among these coordinates. The restricted space is not necessarily an afine superspace or a supermanifold. In the cases we will deal with it is in fact an {\it affine superscheme}. In the next subsection we try to give a brief summary on the principal results on schemes and superschemes. In the non super case a complete treatment can be found in  any textbook on algebraic geometry (see for example the first chapter of Ref. \cite{eh}). For the  super case there is a thorough treatment in Ref. \cite{ccf}.

\subsection{Schemes and superschemes}\label{superschemes-sec}

An affine  algebraic variety $Y\subset \C^n$ is  the zero locus of some polynomials. One defines the algebra $F$ of polynomials over $Y$ as the restriction of the the polynomials $\C[x^1,\dots , x^n]$ to Y. In fact, $F$ can be viewed as  the quotient
$$F=\C[x^1,\dots , x^n]/I$$
where $I$ is the ideal of functions that vanish on $Y$, so two polynomials are considered equivalent if they differ by a polynomial that is zero on $Y$. Each point of $Y$ has associated a maximal ideal of $F$. In general, an ideal of $F$ corresponds to an algebraic subset (zero locus in $Y$ of sets of  polynomials in $F$).  The {\it Zariski topology} on $Y$ is such that the closed subsets of $Y$ are precisely the algebraic subsets. Then the open subsets are the complements of closed subsets.

We are going to define now the {\it structural sheaf} of $Y$, $\cO_Y$ (see Appendix \ref{sheaf-def}). We have to  associate to each open set in $Y$ a ring. Let  $U\subset Y$ be an open subset of $Y$. We define
$$\cO_Y(U)=\left\{\frac{f}{g}\; |\; f, g\in A, \, g(x)\neq 0 \; \;\forall x\in U\right\}\,.$$
This procedure is called the {\it localization} of $F$ at $U$. One can check that it satisfies all the conditions to be a sheaf. The stalk at a point $x\in U$ with maximal ideal $\fp$ is just
is the algebra
$${F}_\fp:=\left\{\frac f g\;\; \big|\;\; f\in F,\; g\in  F- \fp\right\}$$
 and it is a  {\it local algebra}, that is, it has a unique maximal ideal. One recovers the algebra $A$ as the algebra of global sections $\cO_Y(Y)=F$.  We will use interchangeably the notation $\cO_Y$ and $\cO_F$.

 \smallskip

 It is convenient to add some points to the topological space $Y$ and consider also all the irreducible subvarieties of $Y$. This is equivalent to consider, not only the maximal ideals but all the prime ideals\footnote{The  reason is that the preimage under an algebra morphisms of a prime ideal is a prime ideal, while this is not true for maximal ideals.}. The topological space that results is called the {\it spectrum} of $A$ and it is denoted as $\rspec(F)$. The Zariski topology is still well defined. The pair $\rspec(F)$ together with its structural sheaf will be denoted as
$$\uspec(F):=(\rspec(F), \cO_F)=(|Y|, \cO_Y)\,,$$
where $|Y|$ denotes $Y$ understood as a topological space.

\smallskip
Given a field $k$, an affine algebra is a finitely generated, noetherian, associative, commutative, unital $k$-algebra without nilpotent elements.
The procedure that we have described above establishes an equivalence of categories between affine algebras and affine varieties.

One can apply the same procedure to an arbitrary commutative algebra $F$. It can, for example, contain nilpotents: $\rspec(F)$ and $\cO_F$ still make sense. The nilpotent elements of a commutative  algebra form an ideal $N$ so one can define the {\it reduced algebra}
$F_\red:= F/N$.
Since $N$ sits inside every prime ideal, we have that
$\rspec(F_\red)\cong\rspec(F)$ as topological spaces. Nevertheless, the sheaves $\cO_F$ and $\cO_{F_\red}$ will be, in general, diferent. In particular, $F_\red$ may be an affine algebra even if $F$ is not.

The following example will be used later on.

\begin{example}\label{nonaffine-ex}{\sl Ring of the dual numbers} Let us consider the algebra of polynomials in one even  variable, $\C[\epsilon]$, and let  $(\epsilon^2)$ be the ideal generated by the element $\epsilon^2$. The quotient $F=\C[\epsilon]/( \epsilon^2)$ is not  an affine algebra, since it contains a nilpotent, namely, the element $\epsilon$. A generic element of $F$ will be of the form
  $$f_0 +\epsilon\,f_1,\qquad f_0, f_1\in \C \,.$$
 The solution of the polynomial equation $\epsilon^2=0$ over $\C$ is just $\epsilon=0$ and in fact, $F_\red\cong \C$. Nevertheless, the algebra $F$ keeps track of the double multiplicity of the solution, so it has more information. The only prime ideal is $\fp=(\epsilon)$  and the stalk of the sheaf at such point is
$$F_\fp=\left\{\;\frac {f_0+\epsilon\,f_1}{g_0+\epsilon\,g_1 }\quad \big|\quad f_i, g_i\in \C, \quad g_0\neq 0 \;\right\}\,.$$
Working with the reduced algebra, there is only one point in the spectrum, $(0)$, and the stalk at that point is simply $\C$.

\hfill$\square$
\end{example}

We are led to the following definition:

\begin{definition} An {\it affine scheme}\footnote{It may seem odd that the category of affine schemes relates to non affine algebras, but  the adjective `affine' on the noun `scheme' is used to distinguish it from a {\it projective scheme}, a generalization that we will not need in this paper.} $X$ is a topological space $|X|$ together with a sheaf of algebras $\cO_X$ which is isomorphic to $\uspec(F)$ for some algebra $F$.

\hfill$\square$

\end{definition}

 Since superalgebras inevitably contain nilpotents, the concept of scheme seems suitable for extension to superalgebras. A superalgebra $\cA=\cA_0+\cA_1$ is an {\it affine superalgebra} if its even part $\cA_0$ is finitely generated as an algebra, its odd part $\cA_1$ is finitely generated as an $\cA_0$-module and taking quotient by the ideal generated by the  odd nilpotents  $\cJ$,  $\tilde \cA:=\cA/\cJ$, one obtains an affine algebra, so $\cA$ contains no further nilpotents.
 %Notice that we quotient here only by the odd nilpotents, that is, the nilpotents that are present in any superalgebra. There may remain some even nilpotents as in Example \ref{nonaffine-ex}. Nevertheless it is customary to call to $\cA_\red $ the {\it reduced algebra of the superalgebra $\cA$}. So one can have superalgebras whose reduced algebra is not affine: they are then non affine superalgebras.
For any superalgebra, we will denote as $\cA_\red$ the superalgebra modulo the ideal generated by all the nilpotents.
 One can also construct the topological space $\rspec(\cA)=\rspec(\cA/\cJ)$, and equip it with a sheaf of superalgebras obtained by localization. We then have:

 \begin{definition}

An {\it affine superscheme} is a superspace $S=(|S|, \cO_S)$ which is isomorphic to $\uspec(\cA)$ for a superalgebra $\cA$ not necessarily affine.

\hfill$\square$

\end{definition}

Given an affine superscheme $S$ with superalgebra $\cA$ there is always an affine scheme $S_\red$  associated to the reduced algebra $\cA_\red$. It is the {\it reduced scheme} of the superscheme, a concept which is similar to the concept of reduced manifold of a supermanifold or reduced algebraic variety of an algebraic supervariety.

\subsection {Constraint $\Phi^2=0$}\label{phi2-sec}

We want to consider the chiral superfield (\ref{chiralsuperfield})
satisfying $\Phi^2=0$. We recall here that we are considering space time shrunk to a point. This means
$$\Phi^2=A^2+2A\theta^\al\chi_\al+(2AF-\frac 12\chi^\al\chi_\al)\theta^\be\theta_\be=0\,,$$ where $A, F\in \left(\cO(S))\right)_0 $ and $\chi_\al\in\left(\cO(S)\right)_1$.

The solutions to the constraint are indeed the superspace of inner morphisms  from $M$ to the spectrum of the ring of dual numbers (\ref{nonaffine-ex})
$$\begin{CD}M@>>>\uspec\left(\C[\epsilon]/(\epsilon^2)\right)\,.\end{CD}$$
This spectrum is not a supermanifold, but a superscheme, so we have to allow families of morphisms parametrized by superschemes $S$ --instead of simply supermanifolds manifolds or super algebraic varieties-- in order to describe the space of inner morphisms:
$$\begin{CD}S\times M@>>>\uspec\left(\C[\epsilon]/(\epsilon^2)\right)\,.\end{CD}$$

We obtain then a system of equations restricting  the coordinates of $\cH=A^{2|2}$:
\beq A^2=0,\qquad A\chi_\alpha=0,\qquad 4AF-\chi^\al\chi_\al=0\,.\label{phi2constraint}\eeq
They define an affine superscheme that we denote  as  $\cL$.
The superalgebra of global sections, $\cO(\cL)$  is
$$\cO(\cL)=C^\infty(\C^2)[ \chi_1,\chi_2]\big/\bigl( A^2,\, A  \chi_\alpha,\, 2AF-\chi_1\chi_2\bigr)\,.
$$

In order to study this scheme we start considering the algebra obtained by quotienting  $\cO(\cL)$ by the odd nilpotents, that is, putting the odd coordinates to zero.
The  affine scheme that we obtain  $\tilde\cL$ satisfies the quadratic relations
\beq A^2=0,\qquad AF=0\,,\label{reducedconstraints}\eeq
so the algebra defining the affine scheme is
$$\cO(\tilde\cL)=C^\infty(\C^2)\big/\bigl( A^2,\, AF\bigr),\qquad (A,F)\in \C^2\,.$$  Since $A$ is an even nilpotent, the scheme is not a supermanifold (nor algebraic variety).

There is, however,  an open set where the   scheme is isomorphic to an affine space. It corresponds to the points where $F$ is invertible, that is, to the prime ideals of $\cO(\tilde \cL)$ that do not contain $F$. We will denote the localization of $\cO(\tilde \cL)$ at these points as $\cO(\tilde \cL)_{F\neq 0}$. This essentially means that we can set $A=0$ from the second equation in (\ref{reducedconstraints}); then, the first one is satisfied identically:
\beq\cO(\tilde \cL)_{F\neq 0}\simeq C^\infty(\C^\times)\,.\label{redlocscheme}\eeq
This is the regular or smooth part of the scheme, represented by the object $\C^\times=\C-\{0\}$.

Going back to  the superscheme, we can do a change of variables for the coordinates:
\beq A'=4AF- \chi^\al\chi_\al,\qquad F'=F \qquad \chi_\al'=\chi_\al\,,\label{changevariable}\eeq with Jacobian
$$J=\begin{pmatrix}4F&4A&-2\chi_2&+2\chi_1\\
0&1&0&0\\
0&0&1&0\\
0&0&0&1\end{pmatrix}\,,$$
which is non singular for $F\neq 0$. The  inverse transformation is
$$ A=\frac{A'+\chi^\al\chi_\al}{4F}, \qquad \qquad F=F',\qquad \chi_\al=\chi_\al'\,,$$ and the smooth part of the superscheme is just given by the ring
$$C^\infty(\C^2)[\chi_1,\chi_2]_{F\neq 0}\big/ ( A')\simeq C^\infty(\C^\times)[\chi_1,\chi_2]\,.$$
This model appeared in its non linear form in Ref. \cite{vak}, so it is called the {\it Volkov-Akulov multiplet}. In terms of superfields it appeared in   Refs. \cite{ro, cdcg,ks}.

\subsection{General constraint $f(\Phi)=0$}
Let $f$ be a polynomial in one variable. We consider now the more general constraint
$$f(\Phi)=0\,.$$ Because of the nilpotency of $\theta^\alpha$, this reduces to
$$f(\Phi)=f(A)+\theta^\al\chi_\al f'(A)+\
\theta^\al\theta_\al\bigl(f'(A)F-
\frac 14f''(A)\chi^\al\chi_\al\bigr)\,.$$
For example, let us take $f(\Phi)=\Phi^n$. Then the constraints that define the affine superscheme $\cL$ are:
\beq A^n=0,\qquad A^{n-1}\chi_\al=0,\qquad
 A^{n-2}\left(4AF-(n-1)\chi^\al\chi_\al\right)=0\,.\label{generalconstraint}
\eeq
The reduced affine scheme in this case is defined  by the ring
$$\cO(\tilde \cL)=C^\infty(\C^2)\big/ \bigl( A^n,\, A^{n-1}F\bigr),\qquad (A,F)\in \C^2\,.$$
We can still localize at $F$ invertible and we get
$$\cO(\tilde \cL)_{F\neq 0}=C^\infty(\C^2)_{F\neq 0}\big/ ( A^{n-1} )\simeq C^\infty (\C\times \C^\times)\big/ \bigl( A^{n-1}\bigr)\,.$$ Differently from (\ref{redlocscheme}) this ring still has the nilpotent $A$ everywhere, so it does not have smooth points.

\begin{remark} Following what some authors in physics do, one could hand-impose an extra constraint
\beq A=a\, \chi^\al\chi_\al\,.\label{hand}\eeq
 The coefficient $a$  can  depend on $F$ but not on $\chi^\al$, because then $A$  would be identically zero.

For $n=2$ this followed from the constraints (\ref{phi2constraint}) only assuming that $F$ was invertible. Then $a$ was determined to be
$a=1/{4F}$.
  The same trick would not work for $n\geq 3$, so (\ref{hand}) is an extra constraint, not coming from (\ref{generalconstraint}), which nevertheless allows to solve trivially (\ref{generalconstraint}). Moreover, $a$ is arbitrary, so there are indeed solutions to $\Phi^3=0$  that do not solve $\Phi^2=0$. For $n=3,4,5,\dots$ the sets of solutions that we obtain in this way are identical.

  The constraint (\ref{hand}), when putting the odd variables to zero, gives $A=0$, which leaves us with the affine algebra $C^\infty(\C)$, something similar to what happened in Example \ref{nonaffine-ex}. The superalgebra would be $C^\infty(\C)[\chi_1, \chi_2]$, that is, the algebra of the  affine superspace
  $A^{1|2}$.

  \hfill$\square$

\end{remark}

This type of constraints appear in Ref. \cite{gh}.

\begin{remark} In Appendix \ref{susy-ap} we wrote the infinitesimal supertranslation algebra which acts on the affine superspace $A^{4|2}$. Actually, the supertranslation generators (\ref{supertranslationgen}) act on $A^{4|4}$, the {\it complexified Minkowski superspace}. There is of course a real form of this superspace and of  the supertranslation algebra which are the usual in physics. On the {\it chiral affine superspace} $A^{4|2}$ only acts the superalgebra generated by the generators  $P_\mu $ and $Q_\alpha$.

 The sets of equations (\ref{phi2constraint}) and (\ref{generalconstraint}) are supersymmetric since $\Phi^2=0$ is a supersymmetric constraint. While in the $n=2$ case the solution obtained by inverting $F$ is a supersymmetric solution, in the $n\geq 3$ case the solution obtained by imposing (\ref{hand}) is not supersymmetric. This can be checked by explicit calculation. Nevertheless, being the constraints supersymmetric, the space of solutions of  (\ref{generalconstraint}) has an action of the supertranslation algebra. It is then mandatory to keep the nilpotent $A$ with $A^{n-1}=0$ in order to preserve supersymmetry. Although we do not know yet the physical interpretation of such fields, it is remarkable that one is lead to maintain genuinely even nilpotents (that is, nilpotents that survive when putting the odd variables to zero)  in order to preserve the supersymmetry.

 \hfill$\square$

\end{remark}

In this situation we do not have enough odd variables for the problem with $n>3$ to be interesting. One can add more odd variables by going to extended supersymmetry. Physically, though,  it is more difficult to give meaning to superspace and superfields in extended supersymmetry. We could also consider real superfields, which have four real odd variables. Finally, one can use several superfields.

In the next section we see how we can satisfy cubic constraints with two superfields.

 \subsection{Cubic constraint with two superfields}\label{phi312-sec}

 Let us start with two superfields
 $$
\Phi_1=A_1 +\theta^\al \chi_{1\al}+\theta^\al\theta_\al F_1,\qquad
\Phi_2=A_2 +\theta^\al \chi_{2\al}+\theta^\al\theta_\al F_2\,.
$$
The quantities $A_1, A_2, F_1, F_2$ (even) and  $\chi_{1\al}, \chi_{2\al}$, $\al=1,2$ (odd) are coordinates in the superspace $A^{4|4}$. On these coordinates we want to impose the constraint $\Phi_1\cdot \Phi_2=0$:
$$\Phi_1\cdot \Phi_2=A_1A_2+\theta^\al(A_1\chi_{2\al}+A_2\chi_1^{\al})+
\theta^\al\theta_\al(A_1F_2+A_2F_1-\frac 12\chi_1^{\al}\chi_{2\al})=0\,,$$ which implies
\begin{align}
&A_1A_2=0,\nonumber\\
&A_1\chi_{2\,\al}+A_2\chi_{1\,\al}=0,\nonumber\\
&A_1F_2+A_2F_1-\frac 12(\chi_1\chi_{2})=0\,.\label{twofields}
\end{align}
To ease the notation we have writen $(\chi_1\chi_{2}):=\chi_1^{\al}\chi_{2\,\al}$.

This defines an affine superscheme $\cL$ with superalgebra
$$\cO(\cL):=C^\infty(\C^4)[\chi_{1\,\alpha},  \chi_{2\,\alpha}]/( A_1A_2,\, A_1\chi_{2\,\al}+A_2\chi_{1\,\al},\,A_1F_2+A_2F_1-\frac 12(\chi_1\chi_{2}))\,.$$
Let us compute the scheme, $\tilde \cL$. Setting to zero the  odd coordinates in (\ref{twofields}) we get
\beq
A_1A_2=0,\qquad A_1F_2+A_2F_1=0\,,\label{reducedconstraints2}
\eeq so
$$\cO(\tilde \cL)=C^\infty(\C^4)\big/\bigl( A_1A_2,\,A_1F_2+A_2F_1\bigr)\,.$$
We can now localize at $F_1\neq 0$ and  solve for $A_2$. Then (\ref{reducedconstraints2}) becomes
$$ F_2 A_1^2=0,\qquad
A_2=-F_1^{-1}{F_2} A_1\,. $$ If we are willing to restrict also to the points $F_2\neq 0$ we would  get
$$A_1^2=0,\qquad A_2=-F_1^{-1}{F_2} A_1\,,$$
so $A_1$ and $A_2$ are even nilpotents. The ring would become
$$\cO(\tilde \cL)_{F_1,\,F_2\neq 0}=C^\infty(\C^\times\times \C^\times\times \C)\big/\bigl( A_1^2\bigr)\,.$$ The scheme is not regular.

\bigskip

We now reintroduce the odd variables, but we localize only at $F_1\neq 0$. Then we can solve for $A_2$
$$A_2=-\frac{F_2}{F_1} A_1+\frac 1{2 F_1}(\chi_1\chi_2)\,.$$
Inserting into the first and  second equations in (\ref{twofields}) we get
\begin{align}
 & F_2 A_1^2-\frac 12(\chi_1\chi_2)A_1=0\nonumber\\
 &A_1\left(\chi_{2\,\al}-\frac {F_2}{F_1}\chi_{1\,\al}+\frac {1}{2F_1}(\chi_1\chi_2)\chi_{1\,\al}\right)=0\,.\label{a1}\end{align}
 If $F_2\neq 0$, one can see that $A_1^4=0$, so $A_1$ is nilpotent.
 \bigskip

One may  consider, as in (\ref{hand}), the extra condition  that $A_1$ is an even function of the odd variables,
\beq A_1=a(\chi_1^2)+ b(\chi_2^2)+c(\chi_1\chi_2)\,,\label{ansatz}\eeq with $a$, $b$ and $c$ coefficients that can also be functions of the other fields. As before, we stress that the meaning is that when putting the odd variables to zero we take $A_1=0$, which solves trivially $A_1^2=0$.

Inserting now the ansatz in (\ref{a1}), and after some calculations, we get
$$a=\frac{1}{4F_1}-\frac {F_2^2}{F_1^2}\,b,\qquad\qquad c=-\frac{2F_2}{F_1}\,b\,,$$ and $b$ is free.
 We have made use of the identities
$$(\chi_1\chi_2)\chi_{1\al}=-\frac 12 (\chi_1^2)\chi_{2\al},\qquad \qquad
 (\chi_1\chi_2)\chi_{2\al}=-\frac 12 (\chi_2^2)\chi_{1\al}\,.$$
 So the superfields become
 \begin{align}
 A_1&=\left(\frac{1}{4F_1}+\frac{F_2^2}{F_1^2}b\right)(\chi_1^2)+ b(\chi_2^2)-2\frac{F_2}{F_1}b
 (\chi_1\chi_2),\nonumber
 \\
 A_2&= \left(-\frac{F_2}{4F^2_1}
 -\frac{F_2^3}{F_1^3}b
 \right)(\chi_1^2)-
\left(2\frac{F_2^2}{F_1^2}b-\frac{F_2}{F_1}b(\chi_2^2)+\frac{1}{2F_1}\right)
(\chi_1\chi_2)\,,\label{a1a2}
 \end{align}
 with $\chi_1$ and $\chi_2$ free, $F_2$ free, and $F_1\neq 0$. Since the equations (\ref{twofields}) are symmetric under the exchange $1\rightleftarrows2$, one can obtain a similar solution by inverting $F_2$.

 If   $b=0$, the terms proportional to $b$ in  $\Phi_1$ vanish and  we get $\Phi_1^2=0$. With this choice we obtain for  $\Phi_2$
$$\Phi_2^2=-\frac 1{8F_1}\chi_1^2\chi_2^2\neq0\,.$$

It is not difficult to see that the  particular set of solutions with $b=0$ is equivalent to the system
\beq \Phi_1^2=0,\qquad \Phi_1\Phi_2=0\,,\label{phi1phi2}\eeq once we have inverted $F_1$. One obtains
$$A_1=\frac 1{4F_1}(\chi_1)^2,\qquad A_2= \frac {(\chi_1\chi_2)}{2F_1}-\frac{F_2(\chi_1)^2}{4F_1^2}\,,$$ and the remaining coordinates free (note that $F_2$ is not required to be invertible). This system has supersymmetry, since it has been cast in the supersymmetric form (\ref{phi1phi2}).

 \bigskip

 We  check now the cubic constraints for generic $b$:
 $$\Phi_1^2\Phi_2=\Phi_1\Phi_2^2=0$$ by virtue of $\Phi_1\Phi_2=0$. On the other hand, one gets also
 $$\Phi_1^3=0,\qquad\qquad  \Phi_2^3=0\,.$$
From the three equations in (\ref{generalconstraint}) with $n=3$, the first two ones are trivially satisfied because they have order greater than four in the odd variables and we only have four of them. The third one is only of order four in the odd variables and the terms must cancel exactly. It is not difficult to check that this happens for both superfields.

Nevertheless, the system $$\Phi_1\Phi_2=0,\qquad \Phi_1^3=0,\qquad \Phi_2^3=0$$ gives rise to a superscheme that is not regular. The constraints are
\begin{align*}
&A_1A_2=0, \qquad A_1\psi_2+A_2\psi_1=0,\qquad A_1F_2+A_2F_1-\frac 12(\psi_1\psi_2)=0,\\
&A_1^3=0,\qquad A_2^3=0,\qquad A_1^2\psi_{1\alpha}=0,\qquad A_2^2\psi_{2\alpha}=0, \\
&A_1\left(A_1F_1-\frac 12(\psi_1)^2\right)=0,\qquad A_2\left(A_2F_2-\frac 12(\psi_2)^2\right)=0\,.\end{align*}

The  scheme $\tilde\cL$ is given by the constraints
$$A_1A_2=0,\quad A_1F_2+A_2F_1=0,\quad A_1^3=0,\quad A_2^3=0,\quad F_1A_1^2=0, \quad F_2A_2^2=0\,.$$ Even restricting to $F_1\neq 0$, one obtains
$$A_2=-\frac {F_2}{F_1} A_1,\qquad A_1^2=0\,,$$ so a nilpotent remains that cannot be put directly to zero. The solutions obtained in (\ref{a1a2}) by imposing the ansatz  (\ref{ansatz}) do not reflect the whole solution space, and consequently they are not supersymemtric.

These constraints appeared in Refs. \cite{bfz, ks, fd}.

\subsection{Cubic constraint with an arbitrary number of superfields}\label{newconstraint-sec}

The system (\ref{phi1phi2}) can be generalized by adding more chiral superfields.  We consider the following system :
\begin{align*}
&X=A+(\theta\chi)+\theta^2 F, &&X^2=0 \\
&Y_i=A_i+(\theta\psi_{i})+\theta^2 F_i, &&XY_i=0
\end{align*}
for $i=1, \dots ,n$ and $n$ arbitrary.
From (\ref{phi2constraint}) and  (\ref{twofields}) the constraints are equivalent to the system
\begin{align*}
 &A^2=0,&& A\chi_\alpha=0, &&4AF-\psi^2=0\\
&AA_i=0,
&&A\psi_{i\,\al}+A_i\chi_{\al}=0,
&&AF_i+A_iF-\frac 12(\chi\psi_{i})=0\,.
\end{align*}
Putting the odd variables to zero, the constraints  become
\begin{align*}
 &A^2=0,&&AF=0\\
&AA_i=0,
&&AF_i+A_iF=0\,.
\end{align*}
Localizing at $F$ invertible we can solve
$$A=0, \qquad A_i=0\,,$$ which means that the  scheme has a smooth part
$$\C[\tilde\cL]_{F\neq 0}=\C^\infty(\C^\times\times \C^n)\,.$$
Reinserting the fermions we get
$$A=\frac{\psi^2}{4F},\qquad A_i=-\frac{\chi^2 F_i}{4F^2}+\frac 1{2F}(\chi\psi_{i})\,,$$ and the remaining equations are satisfied trivially.

In this case we can use the same method than in Section \ref{phi2-sec}. We perform a change of variables
\begin{align*}A'=4AF-\chi^2,\qquad A'_i=AF_i+A-iF-\frac 12\chi\psi_i,\\
F'=F,\qquad F'_i=F_i,\qquad \chi'=\chi,\qquad \psi'_i=\psi_i\,,\end{align*}
 whose Jacobian is invertible if $F$ is so. The constraints are then
 $$A'=0, \qquad A'_i=0\,,$$ and  the superscheme  at $F$ invertible becomes
 $$\C[\cL]_{F\neq 0}\simeq\C^\infty(\C^\times \times \C^n)[\chi_\al, \psi_{i\,\al}]\,,$$ with $\al=1,2,$ and $i=1,\dots , n$.
The superfields $Y_i$, $i=1, \dots, n$ satisfy
\beq Y_iY_jY_k=0,\qquad \forall \;i,j,k=1,\dots ,n\,.\label{3spf}\eeq
In order to prove this we have used the following Fierz identity:
$$(\psi_1\psi_2)\psi_3+(\psi_3\psi_1)\psi_2+(\psi_2\psi_3)\psi_1=0\,.$$
 We note that this is not the most general solution to (\ref{3spf}). For $n=1$ the solution presented in (\ref{a1a2}) with $\Phi_1=X$ and $\Phi_2=Y$  is  a more general one ($b\neq 0$).

 The case $n=3$ was presented in Refs. \cite{vw, kvw}.

 \section{A non algebraic constraint}\label{nonalgebraic-sec}

 In this section we are going to consider both, chiral and antichiral superfields. Up to now we were considering only chiral superfields, so the description of Section \ref{algebraic-sec} was the simplest one. Moreover, the spacetime coordinates would not appear explicitly in the discussion of the constraints, so it was as if spacetime was reduced to a point.
 In this section we will not be in that case anymore and the spacetime variables would play a role.

 We shall start with a (complexified) super spacetime $M=A^{4|4}$, with  $$x^\mu,\quad \mu=0,\dots, 3,\qquad \theta^\alpha, \; {\bar\theta}^{\dot\alpha},\quad \alpha,\dot\alpha=1,2$$ being its global coordinates. As the notation suggests, $\theta^\alpha$ and $\bar{\theta}^{\dot\alpha}$ are related by an antilinear involution that defines  the standard  (real) super Minkowski space, for which the coordinates $x^\mu$ are real. But for our purposes it is better to keep the super spacetime complexified, so $\theta^\alpha$ and ${\bar\theta}^{\dot\alpha}$ are independent odd coordinates and $x^\mu $ are even, complex coordinates.

According to what we expressed at the end of Section \ref{superfields-sec}, we will now work with families of morphisms  $S\times M\rightarrow \C$ parametrized by $S$. In terms of the algebras we will have, as before
$$\begin{CD}\cO(\C)@>>>\cO(S)\otimes\cO(M)\\
z@>>> \Phi\,,
\end{CD}$$
where
\begin{align}\Phi=\,&F+(\theta\chi)+(\bar\theta\bar\chi)+(\theta\theta)M+(\bar \theta\bar \theta)N+ (\theta\sigma^\mu\bar\theta)V_\mu\nonumber \\ &+(\theta\theta)(\bar\theta\bar \lambda)+(\bar\theta\bar\theta)(\theta\Psi) +(\theta\theta)(\bar\theta\bar\theta) D\,.\label{superfieldbig}\end{align}
If there were no dependence on the coordinates $x^\mu$ of spacetime, then the superspace of superfields would be $\cH=A^{8|8}$, with coordinates
$$(F, M, N, V_\mu, D \;|\; \phi^\alpha,\bar\chi^{\dot\alpha},\bar \lambda^{\dot\alpha},\psi^\alpha),\qquad \mu=0,\dots 4,\quad \alpha,\dot\alpha=1,2\,.$$
The symbols $\sigma^\mu$ stand for the Pauli matrices and the notation for the spinor (odd) fields is explained in Appendix \ref{spinors-ap}.

When we consider that spacetime is not a point, the coordinates above depend on $x^\mu$ and are the {\it component fields} of the superfield. This suffices to describe the family of maps parametrized by $S$.

We consider now two odd derivations on $\cO(M)=\cO^{4|4}(\C^4)$
$$ D_\alpha=\partial_\alpha+\ri \sigma^\mu_{\alpha\dot\alpha}{\bar \theta}^{\dot\alpha}\partial_\mu,\qquad
\bar D_{\dot\alpha}=-\partial_{\dot\alpha}-\ri \theta^\alpha\sigma^\mu_{\alpha\dot\alpha}\partial_\mu\,,
$$
where
$$\partial_\alpha=\frac{\partial}{\partial\theta^\alpha}, \qquad
\partial_{\dot\alpha}=\frac{\partial}{\partial{\bar\theta}^{\dot\alpha}},\qquad \partial_\mu =\frac{\partial}{\partial x^\mu}\,.$$
$D_\alpha$ and $\bar D_{\dot\alpha}$ are indeed the right invariant vector fields of the action of the supertranslation generators (see Appendix \ref{susy-ap}).

The derivations above can be applied to $\cO(S)\otimes\cO(M)$, provided that the rule of signs is appropriately taken into account.

We now define chiral superfields in terms of a constraint on superfields, as it is done in physics. We say that the superfield $\Phi$ (\ref{superfieldbig}) is chiral if $$\bar D_{\dot\alpha}\Phi=0\,.$$
From the fact that
$$\bar D_{\dot\alpha}(x^\mu+\ri  (\theta\sigma^\mu{\bar \theta})=0\,,$$
it is  easy to see that, under the change of variables
$$(x^\mu,\, \theta^\alpha,\, \bar{\theta}^{\dot\alpha})\;\longrightarrow \;
(y^\mu= x^\mu+\ri  (\theta\sigma^\mu{\bar \theta}), \,\theta^\alpha,\, \bar{\theta}^{\dot\alpha}),\qquad (\theta\sigma^\mu{\bar \theta}):=\theta^\alpha\sigma^\mu_{\alpha\dot\alpha}{\bar \theta}^{\dot\alpha}\,,$$
a chiral superfield $\Phi^\mathrm{ch}$ can be written as
$$\Phi^\mathrm{ch}=A^\mathrm{ch}(y)+\left(\theta\chi^\mathrm{ch}(y)\right)+
\theta^2 F^\mathrm{ch}(y)\,.$$
Instead, under the change of variables
$$(x^\mu,\, \theta^\alpha,\, \bar{\theta}^{\dot\alpha})\;\longrightarrow \;
(\bar y^\mu= x^\mu-\ri  (\theta\sigma^\mu{\bar \theta}), \,\theta^\alpha,\, \bar{\theta}^{\dot\alpha})\,,$$ the antichiral superfield
$$D_\alpha X^\mathrm{ach}=0\,,$$ is expressed as
$$X^\mathrm{ach}=A^\mathrm{ach}(\bar y)+\left (\bar\theta\chi^\mathrm{ach}(\bar y)\right)+
\bar\theta^2F^\mathrm{ach}(\bar y)\,.$$
$y^\mu$ and $\bar y^\mu$ are related by complex conjugation (as the notation suggests).
The complex conjugate of a chiral superfield is an antichiral superfield and we also have
\beq \bar y^\mu=y^\mu -2\ri  (\theta\sigma^\mu{\bar \theta})\,.\label{cachange}\eeq

\bigskip
Let us now consider now two chiral superfields
$$X=A(y)+\theta\chi(y)+\theta^2 F(y),\qquad
Y=B(y)+\theta\psi(y)+\theta^2 G(y)\,,$$  and assume that $X^2=0$ as in Section \ref{phi2-sec}. Then, if $F$ is invertible
$$X=\frac {\chi(y)^2}{4F(y)}+\theta\chi(y)+\theta^2 F(y)\,.$$
We write the complex conjugate of $X$ as
$$\bar X=\frac {\bar\chi(\bar y)^2}{4\bar F(\bar y)}+\bar\theta\bar \chi(\bar y)+\theta^2 \bar F(\bar y)\,,$$ where the notation (the usual one in physics) means
$$\bar A(\bar y):= \overline{A(y)},\qquad \bar \chi_{\dot\alpha}(\bar y):= \overline{\chi_\alpha(y)},\qquad \bar F(\bar y):= \overline{F(y)}\,.$$

The constraint that we intend to impose is \cite{ks}
\beq\bar X Y= \hbox{antichiral}\,.\label{acconstraint}\eeq In order to do that, one writes the superfield $Y$ in terms of the variable $\bar y^\mu$, using (\ref{cachange}) and expanding in Taylor series, which is finite because of the nilpotency of the odd variables. One gets
$$Y=B+ \theta \psi+\theta^2 G+2\ri\partial_\mu B (\theta\sigma^\mu\bar \theta)-\ri\theta^2(\partial_\mu\psi\sigma^\mu\bar \theta)\,,$$ where all the component fields are evaluated at $\bar y^\mu$. In order to impose the constraint (\ref{acconstraint}), the only components of $\bar XY$ that can survive are the ones proportional to $1$, $\bar\theta^\alpha$ and $\bar\theta^2$. After some calculations we get (recall that $F$ is invertible)

\begin{align}
&\bar \chi^2\psi_\alpha=0, &&\bar\chi^2G=0\nonumber\\
&\frac {\partial^2 B}{4\bar F}\bar\chi^2+\frac {\ri}{2}(\partial_\mu\psi\sigma^\mu\bar \chi)+ \bar F G=0, &&
-\ri\partial_\mu B\sigma^\mu_{\alpha\dot\alpha}\bar \chi^{\dot\alpha} +\psi_\alpha \bar F=0,\nonumber \\
&-\ri(\partial_\mu\psi^\alpha\sigma^\mu_{\alpha\dot\alpha})\frac{\bar \chi^2}{4\bar F}+ \bar\chi_{\dot\alpha}G=0,&& \frac {\ri\partial_\mu B\chi^2}{2\bar F}\sigma^\mu_{\alpha\dot\alpha}+\bar\chi_{\dot\alpha}\psi_\alpha=0\,.\label{diffconstraints}
\end{align}
Putting the odd variables to zero, we get  $G=0$ and $B$  undetermined. The constraints are algebraic and the scheme we obtain  is
$$\cO(\tilde \cL)\simeq C^\infty (\C^\times\times \C)\,.$$
The full constraints can be considerably simplified using the fact that $F$ is invertible. For example, one can isolate $G$ and $\psi_\alpha$
$$G=-\frac {\partial^2 B}{4\bar F^2}\bar\chi^2-\frac {\ri}{2\bar F}(\partial_\mu\chi\sigma^\mu\bar \chi),\qquad \psi_\alpha=
\ri\partial_\mu B\sigma^\mu_{\alpha\dot\alpha}\bar \chi^{\dot\alpha}\,,$$ and the remaining constraints are satisfied. Although the constraints involve derivatives, the superscheme can be given algebraically, in its complex version, as
$$\cO(\cL)\simeq C^\infty (\C^\times\times \C)[\chi_\alpha,\bar \chi_{\dot\alpha}]\,.$$

This example is also illustrative of the properties of the odd fields. All through the calculations one has to assume that an odd field, say $\chi_\alpha$, and its spacetime derivative $\partial_\mu\chi_\alpha$ have a product that is different from zero. Finally, the superscheme has a smooth part, on which the supersymmetry transformations have a well defined action.

 \section{Observables and nilpotent variables}\label{observables-sec}
 Let $F$ be an algebra  and consider the scheme $\uspec(F)$. Let $\fp$ be a prime ideal in $F$, so $\fp\in |X|=\rspec(F)$ and consider the quotient $F/\fp$.  This is an integral domain (the product of two non zero elements is a non zero element). Moreover, if we consider the localization of $F$ over $\fp$,
 $$F_\fp=\left\{ \frac{f}{g}\; |\; f, g\in F,\;g(\fp)\neq 0  \right\}\,,$$
 and we quotient with the ideal $F_\fp \cdot \fp$, the result $\kappa(\fp):=F_\fp/(F_\fp \cdot \fp)$ is a field, since every non zero element in $\kappa(\fp)$ has an inverse. The field $\kappa(\fp)$ is called the {\it residue field of $|X|$ at $\fp$}.

 \begin{example} $\hfill$

 \begin{enumerate}

 \item We consider the ring of polynomials in one variable $F=\C[x]$. The prime ideals of $F$  are of the form $\fp_a=( x-a)$, $a\in \C$ or the ideal $( 0)$. It is not difficult to see that the residue field at $\fp_a$ is $\kappa(\fp_a)\cong \C$ and  $\kappa(( 0))$ is the field of rational functions.

 \item If the ground field is $\R$, we have that every irreducible polynomial in $\R[x]$ generates a prime ideal. We still have the maximal ideals as $\fp_a=( x-a)$ that give all the points of $\R$. At them, the residue field is $\kappa(\fp_a)\cong\R$. But, for example, the irreducible polynomial $x^2+1$  also generates a prime ideal. It is not difficult to realize that the elements of the residue field are of the form $a+ xb$, with  $a, b\in \R$ and  $x$ such that $x^2+1=0$, so $\kappa(( x^2+1))\cong\C$.

 \end{enumerate}

 \hfill$\square$
 \end{example}

As we have seen, affine schemes have residue fields that can  vary from point to point. Let $F$ be an algebra, not necessarily affine. For every element $f\in F$ we can define a `function' on $\rspec(F)$ with values in the residue field via the canonical maps
$$\begin{CD}
F@>>>F_{\fp} @>>>\kappa(\fp)\\f@>>>f@>>>f(\fp)\,.
\end{CD}$$
In an affine variety, with affine algebra $F$, one recovers in this way the original interpretation of $F$ as the algebra of functions on the algebraic variety. The same holds in the case of differentiable manifolds and smooth functions.

 If $F$ contains a nilpotent element, say $n$, then $n\in \fp$ for all prime ideals $\fp$ so $n(\fp)=0$. In other words, $n$ is sent to the zero function and one cannot reproduce the original algebra $F$  starting from an algebra of   functions on $\rspec(F)$. This is something that we already knew  (see Example \ref{nonaffine-ex}), but now we can read it from a physical point of view.

 \smallskip

 A classical mechanics system is commonly described in terms of a symplectic manifold called {\it phase space}, whose points represent the  possible states of the system.  Classical observables are smooth functions on phase space. There is a special observable, the Hamiltonian, which governs the time evolution  of the system: given the initial state in an instant of time $t_0$, the system evolves in future times $t$ by following the integral curve of the hamiltonian vector field associated to the Hamiltonian, passing through the initial state.

 This picture can be more or less carried over classical field theory by substituting the phase space for an infinite dimensional  space of maps from spacetime to a target manifold or of sections of some bundle over spacetime, which are the {\it fields}. Most of the time one uses variational calculus to approach classical field theory instead of trying to give some comprehensive study of these infinite dimensional spaces, which can be very involved. Nevertheless, the idea of observable is mimicked from classical mechanics: observables are (a special class of) functionals on the space of fields.  The time evolution of the system is also governed by some partial differential equations (usually up to second order) for the fields.

 The idea that we want to convey can be already understood at the classical mechanics level. Suppose that we want to generalize the classical phase space to some sort of affine scheme whose algebra contains nilpotents. The usual way to obtain `numbers' (results of a measurement) from the sections of the scheme is by the evaluation procedure explained above. Nilpotent elements go to zero by this map, so they do not represent observables.

 Now, one could do the same for a superscheme: even a smooth supermanifold or regular algebraic variety contains nilpotents generated by the odd elements which could not be seen in any `classical' measurement. What we are affirming is that in this interpretation of the observables, classical, odd degrees of freedom could not be seen in experiments.

 \smallskip

 In quantum mechanics things are very different. States are rays in a Hilbert space and observables are hermitian operators on it. The results of measurements are eigenvalues of these operators, and they appear with a probability distribution determined by the Hilbert space state. The algebra of operators on a Hilbert space is non commutative, so sometimes it is said, very  roughly, that  `quantizing' a system corresponds to substitute the commutative algebra of observables by a non commutative one such that when taking the limit $\hbar\rightarrow 0$ the original commutative algebra is recovered.

   Let us consider the simplest case possible, a two dimensional phase space $\R^2$ with canonical coordinates $(q,p)\in \R^2$ and symplectic form $\rd q\wedge\rd p$. The induced Poisson bracket on the coordinates is
 \beq\{q, p\}_-=1\,.\label{canonicalcr}\eeq
 As a quantum system, one considers the Hilbert space of square integrable functions on the variable $q$, $\cL^2(\R)$. One considers the  position and momentum operators:
 $$Qf(q)=q f(q),\qquad Pf(q)=-\ri\hbar\frac{\partial f}{\partial q} \,,$$ whose commutation rule is
 $$[Q, P]_-=\ri\hbar \,\rid\,.$$
 Taking  $\hbar\rightarrow 0$ the commutation relation is reverted to the commutativity of $q$ and $p$ as ordinary functions on phase space. The fact that the term of  order one in $\hbar$ is proportional to the Poisson bracket (substituting the constant function `1' by the identity) is not casual, but a requirement of the quantization.

 Let us assume  that the phase space is now substituted by a superspace, for example $A^{2|2}$, with superalgebra $C^\infty(\R^2)\otimes\wedge[\theta, \pi]$. There is also a super Poisson structure on it:
 $$\{\theta, \pi\}_+= 1,\qquad  \{q, p\}_-=1\,, $$ and the rest zero.  The Poisson bracket of two odd quantities is symmetric.

 As in the non super case, the superalgebra $C^\infty(\C^2)\otimes\wedge[\theta, \pi]$ admits a deformation with parameter $\hbar$.

 Let us focuss now on the deformation of the Grassmann algebra $\wedge[\theta,\pi]$. Mimicking the procedure with the even variables, we get a  non commutative superalgebra with generators $\Theta$ and $\Pi$ satisfying the commutation rules
 $$[\Theta,\Pi]_+=\ri \hbar\, \rid \qquad[\Theta, \Theta]_+=0,\qquad [\Pi,\Pi]_+=0\,, $$ which is the algebra of the odd quantum oscillator with its creation and destruction operators. From them, one can construct the Hilbert space and the quantum observables.

 There is a linear change of variables
 $$ \Gamma=\Theta+\ri \Pi,\qquad \Upsilon=\Theta-\ri \Pi\,,$$ which shows that this algebra is isomorphic to the Clifford algebra $C(1,1)$ \cite{bm,llf,ll}:
 \beq[\Gamma,\Gamma]_+=\hbar \, \rid ,\qquad [\Upsilon,\Upsilon]_+=-\hbar \, \rid ,\qquad [\Gamma,\Upsilon]_+=0\,.\eeq  The key point here is the symmetry of the super Poisson bracket.

 \smallskip

 The conclusion is that, in their quantum version, odd variables can give rise to meaningful observables. The classical limit $\hbar\rightarrow 0$ leaves the superspace mathematical structure, but is does  not produce classical, odd observables.

\section*{Acknowledgements}

I would like to thank the Department of Physics and Astronomy and the Department of Mathematics of UCLA for their kind hospitality during the realization of this work.

I'm indebted to S. Ferrara and V. S. Varadarajan for invaluable discussions. I also would like to mention an inspirational conversation about fermions and observables that I maintained many years ago at CERN with R. Stora. I specially want to thank P. Deligne for suggesting an approximation to superfields more natural and elegant than our the previous version, that considered only Grassmann algebras.

This work has been supported in part by grants  FIS2011-29813-C02-02, FIS2014-57387-C3-1, FIS2017-84440-C2-1-P and  SEV-2014-0398 of the Ministerio de Econom\'{\i}a y Competitividad (Spain) and European Funds for Regional Development, European Union -A way to construct Europe- and  by Generalitat Valenciana through the project SEJI/2017/042.

\appendix

\section{Some basic definitions}\label{basic-ap}

\begin{definition}\label{extensions-def} {\sl Extension of scalars.} Let $\cA$ and $\cA'$ be two commutative superalgebras over $k$  and let $f:\cA\rightarrow \cA'$ be a morphism of superalgebras. Then $\cA'$ is an $\cA$-module with action
 \begin{align*}\begin{CD}\cA\otimes \cA'@>>>\cA'\\a\otimes a'@>>> f(a)a'\,.\end{CD}\end{align*} If $M_\cA$ is an $\cA$-module  we can define an $\cA'$-module as the tensor product $$M_{\cA'}:=\cA'\otimes_\cA M_\cA=\cA'\otimes M_\cA\big/\bigl(\, a'\otimes a\cdot m-a'f(a)\otimes m\,\bigr)$$
 with  action
\begin{align*}\begin{CD}\cA'\otimes M_{\cA'}@>>>M_{\cA'}\\
  b'\otimes [a'\otimes m]@>>>[(b'a')\otimes m]\,.\end{CD}\end{align*}
We say that $M_{\cA'}$ is the extension of scalars of $M_\cA$ to $\cA'$.

\hfill$\square$
\end{definition}

\begin{definition}\label{sheaf-def}{\sl Sheaf over a topological space.}
Let $|X|$ be a topological space. A {\it presheaf} on $|X|$ assigns to each open set $U\subset |X|$ a set $\cF(U)$ (it can be an abelian group, an algebra, a superalgebra, a module, ...) and to every pair of open sets $U\subset V\subset |X|$ a {\it restriction map}
$$\begin{CD}\cF(V)@>\res_{V, U}>>\cF(U)\end{CD}$$ satisfying
\begin{enumerate}
\item $\res_{U, U}=\rid$.
\item $\res_{V, U}\circ \res_{W, V}= \res_{W, U}$ for al  $U\subset V\subset W\subset |X|$.

\end{enumerate}

The elements of $\cF(U)$ are called {\it local sections} of $\cF$ over $U$ and the elements of $\cF(|X|)$ are called {\it global sections}.

A presheaf is a {\it sheaf} if it satisfies the condition that, for each open covering $\{U_\alpha\}_{\alpha\in A}$ of an open set $U$ (in particular, of the total space $|X|$),  and each collection of elements $\{f_\alpha\in \cF(U_\alpha)\}_{\alpha\in A}$ such that
$$\res_{ U_\alpha, U_\alpha\cap U_\beta} (f_\alpha )=
\res_{ U_\beta, U_\alpha\cap U_\beta} (f_\beta ),\qquad \forall \alpha,\beta \in A\,,$$
there exists a unique element $f\in \cF(U)$ such that
$$\res_{U, U_\alpha}(f)=f_\alpha,\qquad \forall \alpha\in A\,.$$

\hfill$\square$

\end{definition}

\begin{example}{$\hfill$}

\begin{enumerate}

\item Continuous, differentiable, real analytic or complex analytic functions on a topological space are all sheaves of algebras.

\item  Sections of a vector bundle over a topological space are a sheaf of modules over some algebra of functions.

\item Constant functions over a topological space are, generically, only a presheaf. If the space is connected, then the sheaf condition is satisfied. Also, on a not necessarily connected space, one can define the sheaf of {\it locally constant functions}, that is, functions that are constant on an open neighborhood of each point.

\end{enumerate}
\hfill$\square$
\end{example}

\begin{definition} \label{stalk-def} {\sl Stalk of a sheaf  over a point.} Let $\cF$ be a sheaf of abelian groups (all the sheaves that we use are so) over the topological space $|X|$. Let $x\in |X|$. The {\it stalk of $\cF$ at $x$}, denoted as $\cF_x$ is the direct limit (see for example Ref.  \cite{eh})  of the family of abelian groups $\cF(U)$ running over all neighborhoods $U$ of $x\in |X|$.

\hfill$\square$

\end{definition}
%The definition of direct limit of abelian groups is standard and it follows the same lines than Definition \ref{dirlim-def}, with the morphisms given by the restrictions.

\begin{definition} \label{sheafmor-def} A {\it morphism of sheaves} $\varphi: \cF\ra\cG$ over the same topological space $|X|$ is a collection of maps $\varphi_U: \cF(U)\ra\cG(U)$, $U\subset_\mathrm{open} |X|$ such that for every pair of open sets $V\subset U\subset |X|$ the diagram
$$\begin{CD}
\cF(V)@>\varphi_V>>\cG(V) \\
@V\res_{V,U}VV @VV\res_{V,U}V\\
\cF(U)@>\varphi_U>>\cG(U)
\end{CD}$$
commutes

\hfill$\square$

\end{definition}

\section{Notation for spinors}\label{spinors-ap}

 Let $\theta^\al$ with  $\al=1,2$ odd coordinates in some affine superspace. As customary, we define
$$ (\ep_{\al\be}):=\begin{pmatrix}0&1\\-1&0\end{pmatrix},\qquad
(\ep^{\al\be}):=\begin{pmatrix}0&-1\\+1&0\end{pmatrix},\qquad \ep^{\al\be}\ep_{\be\ga}=\delta^\al_\ga\,,$$ where sum over repeated indices is understood. We also define
$$
\theta_\al:=\ep_{\al\be}\theta^\be,\qquad\hbox{so}\qquad  \theta^\al=\ep^{\al\be}\theta_\be\,.$$ In general, for any two pairs of odd quantities, $\theta^\al$ and $\psi^\al$, anticommuting among them
 $$\theta^\al\theta^\be=-\theta^\be\theta^\al,\qquad
\psi^\al\psi^\be=-\psi^\be\psi^\al,\quad\theta^\al\psi^\be=-\psi^\be
\theta^\al\,,$$ one has
$$\theta^\al\psi_\al=
\psi^\al\theta_\al,\qquad \theta^\al\theta_\al=2\theta^1\theta^2\,.$$
One also denotes $(\theta\psi)=\theta^\al\psi_\al$.

 One can prove the so called {\it Fierz identities} \begin{align}&\theta^\al\theta^\be=-\ep^{\al\be}\theta^1\theta^2=-\frac 12\ep^{\al\be}\theta^\ga\theta_\ga,\nonumber\\
&\theta_\al\theta_\be=\ep_{\al\be}\theta_1\theta_2=
\frac 12\ep_{\al\be}\theta^\ga\theta_\ga,\nonumber\\&
(\psi\theta)\chi+(\chi\psi)\theta+(\theta\chi)\psi=0\,.\label{fierz}
\end{align}

\bigskip

The Pauli matrices are
$$
\sigma_0=\begin{pmatrix} 1 & 0 \\  0  &  1 \end{pmatrix}=\id, \quad
\sigma_1=\begin{pmatrix} 0 & 1 \\  1  &  0\end{pmatrix}, \quad
\sigma_2=\begin{pmatrix} 0 & -\ri \\ \ri   &  0 \end{pmatrix}, \quad
\sigma_3=\begin{pmatrix} 1 & 0 \\  0  &  -1 \end{pmatrix}\,,
$$ and they satisfy  the relations
$$\sigma_i\sigma_j=\delta_{ij}\id+\ri \epsilon_{ijk}\sigma_k\,,\qquad i,j,k=1,2,3\,,$$
where, as usual, $\ep_{ijk}$ is the totally antisymmetric tensor with $\ep_{123}=1$.

\section{Supersymmetry transformations.} \label{susy-ap} We give here the relations of the supertranslation algebra acting on the Minkowski superspace.     We follow the conventions of Ref. \cite{fz}. A basis of the  supertranslation Lie algebra is given by
\begin{align}
&Q_\al, \quad \bar Q_{\dal}, &&\al,\dal=1,2\qquad&&\hbox{(odd), }\nonumber\\
&P_\mu,\quad &&\mu=0,\dots, 3 \qquad  & & \hbox{(even)}\,.\label{supertranslationgen}
\end{align}
The standard real form makes $P_\mu$ real and $\bar {Q}_{\dal}$ the complex conjugate of $Q_\al$.
The commutation relations among the generators (\ref{supertranslationgen}) are
$$\{Q_\alpha,\bar Q_{\dbe}\}=2(\sigma_\mu)_{\al\dbe}P^\mu\,,$$ and the rest zero (the Pauli matrices are listed in Appendix \ref{spinors-ap}). The action of the supertranslation algebra on chiral superfields is as follows:
Let $\xi^\al$, $\bar\xi^\dal$ denote the odd supertranslation parameters and $a^\mu$ the even ones. The infinitesimal transformations on the component fields $A$, $\psi_\al$ and  $F$
\begin{align}
\delta_{a} (\,\cdot\,)&= a^\mu \ri\partial_\mu (\,\cdot\,) \qquad \hbox{(applied to $A$, $\psi_\al$ and  $F$)},\nonumber\\
\delta_\xi A&=\xi^\alpha\psi_\alpha,\nonumber\\
\delta_\xi\psi_\alpha&=2(\sigma^\mu)_{\al\dbe}\bar \xi^{\dbe}
\ri\partial_\mu A+2F\xi_\al,\nonumber\\
\delta_\xi F&=-\ri\partial_\mu\psi^\alpha (\sigma^\mu)_{\al\dbe}\xi^\dbe\,.\label{susytrans}
\end{align}
Acting with  $\bar {Q}_{\dal}$ on a chiral superfield gives a non chiral superfield, so only $P_\mu$ and $Q_\al$ have a well defined action on the set of chiral superfields. The full super translation algebra acts, for example, on the space of real superfields, which describe a multiplet of supersymmetry with maximal spin 1 (a vector field potential).


\begin{thebibliography}{99}

\bibitem{bl}F. A. Berezin, D. A. Leites. {\it Supermanifolds}. Doklady Akademii Nauk SSSR {\bf 224} 505-508 (1975).

\bibitem{ko} B. Kostant, {\it Graded manifolds, graded Lie groups and prequantisation}. Differential geometry in mathematical physics. Lecture Notes in Mathematics, {\bf 570} Springer. 177-306 (1977).


\bibitem{le} D. A. Leites. {\it Introduction to the theory of supermanifolds}. Uspekhi Matematicheskikh Nauk {\bf 35} 1-64 (1980).

\bibitem{ma} Y. I. Manin. {\it Gauge field theory and complex geometry}.Springer-Verlag, Berlin (1988).

    \bibitem{dm}  P. Deligne, J.W. Morgan {\it Notes on Supersymmetry (following J. Bernstein)} in  {\it  Quantum fields and strings: a course for mathematicians.}  (P. Deligne, P. Etingof, D. S. Freed, L. Jeffrey, D. Kazhdan, J. Morgan, D.R. Morrison and E. Witten, eds.) Vol 1. American Mathematical Society, Providence, (1999).

\bibitem{va} V. S. Varadarajan, {\it Supersymmetry for mathematicians: an introduction}. Courant Lecture Notes, 1.  AMS (2004).

\bibitem{ccf} C. Carmeli, L. Caston and  R. Fioresi {\it  Mathematical foundations of supersymmetry}, with an appendix with I. Dimitrov,
EMS Serie soif Lectures in Mathematics. European
Mathematical Society, Zurich (2011).

\bibitem{fl} R. Fioresi, M. A. Lled\'{o},
{\it The Minkowski and conformal superspaces: the classical and the quantum pictures}, World Scientific Publishing, Singapore, (2015).



\bibitem{as} A. S. Schwarz, {\it On the definition of superspace}. Theoretical and  Mathematical Physics {\bf 60} n 1 (1984)  657-660. Translated from russian Teoreticheskaya i Matematicheskaya Fizika, Vol. 60, n 1.  37–42, July, (1984).

\bibitem{bk} G. Bonavolont\`{a}, A. Kotov, {\it On the space of supermaps between smooth supermanifolds}. arXiv:1304.0394.

    \bibitem{de} For a beautiful exposition of the meaning of the functor of points, see the first lecture by P. Deligne in the Supermoduli Workshop at the Simons Center (May 2015):  http://scgp.stonybrook.edu/video$\_$portal/results.php?event$\_$id=87


   \bibitem{av} Alexander Voronov, {\it Maps of supermanifolds}, Theoretical and  Mathematical Physics {\bf 60} n 1 (1984)  660-664. Translated from russian Teoreticheskaya i Matematicheskaya Fizika, Vol. 60, n 1.  43-48, July, (1984).



    \bibitem{dw} B. DeWitt, {\it Supermanifolds}. Cambridge University Press, (1984).

    \bibitem{ar} A. Rogers. {\it Graded manifolds, supermanifolds and infinite dimensional Grassmann algebras.} Communications in Mathematical Physics, {\bf 105}, 375-384 (1986).

\bibitem{cs}
C. Sachse, {\it A categorical formulation of superalgebra and supergeometry.} (arXiv:0802.4067)








\bibitem{eh} D. Eisenbud, J. Harris,
{\it The geometry of schemes}, Springer (2000).

\bibitem{dfs} E. Dudas, S. Ferrara, A. Sagnotti, {\it A superfield constraint for $N=2\rightarrow  N=0$ breaking.} JHEP 1708 (2017) 109


\bibitem{vak} D. V. Volkov, V. P. Akulov, {\it
Is the neutrino a Goldstone particle?}
 Physics Letters B {\bf 46} (1973) 109-110.

  \bibitem{ro} M. Rocek, {\it Linearizing the Volkov-Akulov Model.}
 Physical Review Letters {\bf 41} (1978) 451-453.

 \bibitem{cdcg} R. Casalbuoni, D. De Curtis, D. Dominici, F. Feruglio, R. Gatto,  {\it Non-Linear realization of supersymmetry algebra from supersymmetric constraint. }  Physics Letters B {\bf 220} n.4. (1989) 569-575.



 \bibitem{ks} Z. Komargodski, N. Seiberg, {\it  From linear SUSY to constrained superfields.}  JHEP 0909 (2009) 066.

     \bibitem{gh} D. M. Ghilencea, {\it  Comments on the nilpotent constraint on the goldstino superfield.}  Modern Physics Letters A {\bf 31} no.12 (2016) 1630011.

\bibitem{bfz} A. Brignole, F. Feruglio, F. Zwirner, {On the effective interactions of a light gravitino with matter fermions.} JHEP 11 (1997) 001.



\bibitem{fd} G. Dall'Agata, S. Ferrara, F. Zwirner, {\it  Minimal scalar-less matter-coupled supergravity.}  Physics Letters B {\bf 752}  (2016) 263-266.



    \bibitem{vw}
    B. Vercnocke, T. Wrase, {\it  Constrained superfields from an anti-D3-brane in KKLT.}  Journal of High Energy Physics {\bf 1608} (2016) 132.

    \bibitem{kvw}
    R. Kallosh, B. Vercnocke, T. Wrase, {\it String theory origin of constrained multiplets.} Journal of High Energy Physics  {\bf 1609} (2016) 063.


\bibitem{flv} R. Fioresi, M. A. Lled\'{o}, V. S. Varadarajan,
{\it The Minkowski and conformal superspaces},
        Journal of  Mathematical Physics,
        {\bf 48} (2007) 113505.


\bibitem{bm} F. A. Berezin, {\it The method of second quantization}. Academic Press, (1996); F. A. Berezin and M. S. Marinov, {\it Particle spin dynamics as the Grassmann variant of classical mechanics.} Annals of Physics {\bf 104} (1997) 336-362 .

    \bibitem{llf} S. Ferrara, M. A. Lled\'{o}, {\it Some aspects of deformations of supersymemtric field theories}. Journal of High Energy Physics  (2000) 05/008.

    \bibitem{ll} M. A. Lled\'{o}, {\it Deformed supersymemtric field theories}. Modern Physics Letters A {\bf 16} (2001) 305-310.

        \bibitem{fz} S. Ferrara, J. Wess,  B. Zumino, {\it  Supergauge multiplets and superfields.}  Physics Letters B {\bf 51}  (1974).
\end{thebibliography}
\end{document}